\documentclass[A4,12pt]{article}
\usepackage{amsmath,amssymb,amsfonts,amsthm,graphicx}
\usepackage{tkz-graph}
\usepackage{color}
\usepackage{cite}
\usepackage{epsfig}
\usepackage{epstopdf}
\usepackage{footnote}
\usepackage{cite}
\usepackage{caption}
\usepackage{subcaption}
\usepackage{enumerate}
\usepackage{algorithm}
\usepackage{algpseudocode}
\usepackage{times}
\usepackage[top=2cm, bottom=2cm, left=2cm, right=2cm]{geometry}

\newtheorem{theorem}{Theorem}[section]
\newtheorem{lemma}[theorem]{Lemma}

\theoremstyle{definition}
\newtheorem{definition}[theorem]{Definition}

\theoremstyle{remark}
\newtheorem{remark}[theorem]{\bf Remark}
\title{A Linear Algorithm for Computing  $\gamma_{[1,2]}$-set in Generalized Series-Parallel Graphs}
\author{P. Sharifani$^{1}$, M.R.~Hooshmandasl$^{2}$\\
	\footnotesize{$^{1,2}$Department of Computer Science, Yazd University, Yazd, Iran.}\\
	\footnotesize{$^{1,2}$The Laboratory of Quantum Information Processing, Yazd University, Yazd, Iran.}  \\
	\footnotesize{e-mail:$^1$ pouyeh.sharifani@gmail.com, $^2$ hooshmandasl@yazd.ac.ir}}
\date{}

\begin{document}

	\maketitle
	
	\begin{abstract}
		For a graph $G=(V,E)$, a set $S \subseteq V$ is a $[1,2]$-set if it is a dominating set for $G$ and each vertex $v \in V \setminus S$ is dominated by at most two vertices of $S$, i.e. $1 \leq \vert N(v) \cap S \vert \leq 2$.
		Moreover a set $S \subseteq V$ is a total $[1,2]$-set if for each vertex of $V$, it is the case that $1 \leq \vert N(v) \cap S \vert \leq 2$.
 The $[1,2]$-domination number of $G$, denoted $\gamma_{[1,2]}(G)$,  is the minimum number of vertices in a $[1,2]$-set. Every $[1,2]$-set with cardinality of $\gamma_{[1,2]}(G)$ is called   a $\gamma_{[1,2]}$-set.
 Total $[1,2]$-domination number and $\gamma_{t[1,2]}$-sets of $G$ are defined in a similar way.
 This paper presents a linear time algorithm to find a $\gamma_{[1,2]}$-set and  a $\gamma_{t[1,2]}$-set in generalized series-parallel graphs.

\noindent\textbf{Keywords:} Domination; Total Domination; [1,2]-set; Total [1,2]-set; Series-parallel graphs; Generalized series-parallel graph.
	\end{abstract}

\section{Introduction}
There is a rich theoretical literature around many important topics in graph theory, however, due to their vast new applications, we are in need of efficient algorithms for computing them. Unfortunately, most of these problems are $\mathbf{NP}$-hard. Traditionally there are two approaches to handle such problems. One is to use algorithms that yield to approximate solution\cite{guha1998approximation,guha1999improved,zou2011new}. The other is to restrict the problem to some special cases and solve the problem efficiently\cite{yen1994linear,chang2015weighted,colbourn1987locating,pfaff1984linear}.

 Finding a minimum dominating set for graphs is such a problem, which is known to be  $\mathbf{NP}$-complete. However, it is solvable in polynomial time for trees and series parallel graphs \cite{aho1974design, garey1976some,cockayne1977towards,hedetniemi1986linear}.

Also, finding a $[1,k]$-set for graphs is an  $\mathbf{NP}$-complete problem \cite{chellali20131}. In \cite{chellali20131}, it is shown that the problem of finding a $[1,2]$-set of cardinality at most $\ell$, for a given integer $\ell$ is    $\mathbf{NP}$-complete, too.

The concept of total $[1,k]$-set is proposed in \cite{chellali20131}. In \cite{total2014}, it has been proved that the problem of checking whether a given graph $G$ have any total $[1,2]$-set is  $\mathbf{NP}$-complete. For some special families of trees,  Yang and Wu computed $[1,2]$-domination number in \cite{yang20141}. Then, Goharshady et al. presented a linear algorithms to compute the minimum cardinality of $[1,2]$-sets and total $[1,2]$-sets in general trees \cite{goharshadi2015}.

 In this paper, we consider the problem of computing the minimum cardinality of $[1,2]$-sets and total $[1,2]$-sets in generalized series-parallel graphs, or GSPs for short. These graphs belong to the class of decomposable graphs which means that they can be represented by their parse trees. They can also be obtained from a set of single-edges by recursively applying the operations of series, parallel and generalized series.
 GSP graphs includes series-parallel (SP) graphs, outerplanar graph, trees, unicyclic graphs, $C_N$-trees, $C$-trees, 2-trees, cacti and filaments \cite{hare1987linear,takamizawa1982linear} (see Figure \ref{fig-gsp}).
 \begin{figure}[!h]
\centering
\includegraphics[width=12cm]{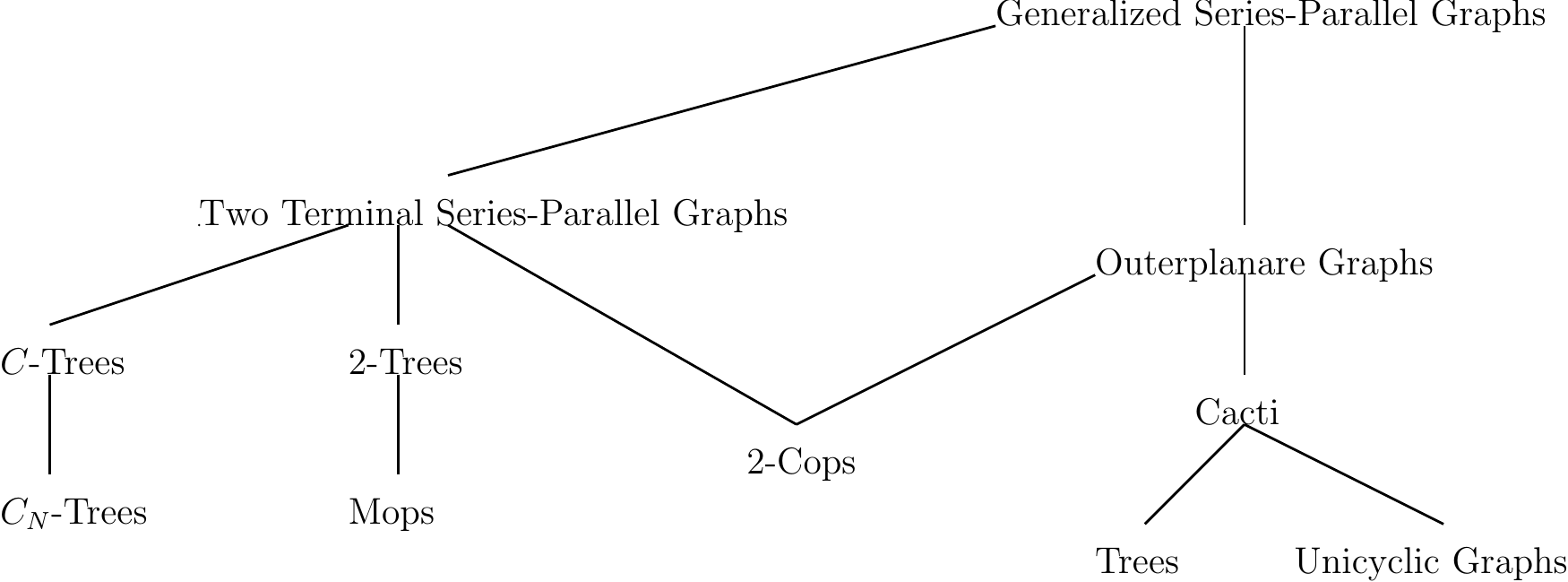}
\caption{Subfamilies of the generalized series-parallel family \cite{hare1987linear}.}\label{fig-gsp}
\end{figure}
Given a  parse tree of a GSP graph, there exist linear-time algorithms for solving many graph theoretic problems such as finding the min cut and the maximum cardinality minimal dominating set \cite{hare1987linear} and minimum dominating set for SPs \cite{hare1987linear}, to name a few see \cite{richey1988minimum}.
\\
 The main contribution of this paper are presenting  polynomial time algorithms to find $[1,2]$-sets and total $[1,2]$-sets with minimum cardinality for GSP graphs.
\\
The paper is organized as follows:
 In Section \ref{Terminology}, we study some notation and terminology of dominations and total dominations.
 In Section \ref{spgraph},  we describe generalized series-parallel graphs and corresponding parse tree.
 In Section \ref{algorithm}, we present a linear time algorithm to compute a $[1,2]$-set with minimum cardinality for generalized series-parallel graphs. In addition in this section we analyze correctness and complexity of the algorithms.
  In Section \ref{algorithm2}, we change procedures of the algorithm  to find minimum total $[1,2]$-sets  in generalized series-parallel graphs.
\\
\section{Terminology and notation} \label{Terminology}
In this section, we review some required terminology and notations of graph theory. For other notation which is not defined  here, the reader is refer to \cite{west2001introduction}.

Let  $G=(V,E)$ be a simple graph.
The open and closed neighborhoods of a vertex $v \in V(G)$ are defined as
$$N_G(v)=\{u : uv \in E(G)\},$$
and
$$N_G[v]=N_G(v) \cup \{v\},$$
respectively. Note that we omit the subscript $G$, whenever there is no risk of confusion.

  A subset  of vertices of  $G$  such as $D$ is called a dominating set, if every $v \in V$ is either an element of $D$ or is adjacent to an element of $D$. In other words, for any $v \in V\backslash D$, it is the case that $\vert N(v) \cap D \vert \geq 1$. Similarly, a set $D$ of vertices of $G$ is called a total dominating set, if every $v \in V$  is adjacent to an element of $D$, i.e. $\vert N(v) \cap D \vert \geq 1$ holds for all $v \in V$.
A $\gamma$-set is a dominating set of $G$ with minimum cardinality. The size of such a set is called domination number of $G$ and  is denoted by $\gamma(G)$.
   Similarly, a $\gamma_t$-set is a total dominating set of $G$ with minimum cardinality and total domination number of $G$  is the cardinality of such a set which is denoted by  $\gamma_{t}(G)$.
   
A set $S \subseteq V$ is called a $[1, 2]$-set of $G$ if for each $v \in V \setminus S$ we have $1 \leq \vert N(v) \cap S \vert \leq 2$, i.e. $v$ is adjacent to at least one but not more than two vertices in $S$. The size of the smallest $[1, 2]$-sets of $G$ is denoted by $\gamma_{[1, 2]} (G)$. Any such set is called a $\gamma_{[1, 2]}$-set of $G$.

 As a generalization of $[1, 2]$-sets, for two nonnegative integers  $j$ and $k$,  where $j\leq k$, a set $S \subseteq V$ is an $[j, k]$-set if for each $v \in V \setminus S$ we have $j \leq \vert N(v) \cap S \vert \leq k$. A set $S' \subseteq V$ is called a total $[j, k]$-set of $G$ if for each $v \in V$ we have $j \leq \vert N(v) \cap S \vert \leq k$, i.e. if each vertex, no matter in $S'$ or not, has at least $j$ and at most $k$ neighbors in $S'$. These definitions can be found in \cite{chellali20131}.

\section{Generalized Series-Parallel Graphs and Parse Tree} \label{spgraph}
First we define the class of graphs that we study in this paper. In addition, we review some known results about them.

\begin{definition}[Generalized Series-Parallel Graphs]\cite{chebolu2012exact}
A generalized series-parallel (GSP) graph is any graph $G=(V,E,s,t)$ with two distinguished nodes $s,t\in V$, called terminals, which is defined recursively as follows:
\begin{itemize}
\item[$(1)\;\;o_i$:]
The graph $G$ consisting of two vertices connected by a single edge is a GSP graph.
\item[$(2)\;\;o_s$:]
Given two GSP graphs $G_1=(V_1,E_1,s_1,t_1)$, $G_2=(V_2,E_2,s_2,t_2)$, the series operation of $G_1$ and $G_2$  creates a new GSP graph $G=G_1 o_s G_2=(V,E, s_1,t_2)$, where
$$V=V_1\cup V_2\setminus \{s_2\},$$
and
$$E=E_1 \cup E_2 \cup\{t_1v: v\in N_{G_2}(s_2)\}\setminus \{s_2v: v\in N_{G_2}(s_2)\}.$$

\item[$(3)\;\;o_p$:]
Given two GSP graphs $G_1=(V_1,E_1,s_1,t_1)$, $G_2=(V_2,E_2,s_2,t_2)$, the parallel operation of $G_1$ and $G_2$  creates a new GSP graph  $G=G_1 o_p G_2=(V,E, s_1,t_1)$, where
$$V=V_1\cup V_2\setminus \{s_2,t_2\}, and$$
$$E= E_1 \cup E_2 \cup\{s_1 v: v\in N_{G_2}(s_2)\}\cup\{t_1v: v\in N_{G_2}(t_2)\}\setminus
    (\{s_2 v: v\in N_{G_2}(s_2)\} \cup \{t_2v: v\in N_{G_2}(t_2)\})$$

\item[$(4)\;\;o_g$:]
Given two GSP graphs $G_1=(V_1,E_1,s_1,t_1)$, $G_2=(V_2,E_2,s_2,t_2)$, the generalized series operation of $G_1$ and $G_2$  creates a new GSP graph $G=G_1 o_g G_2=(V,E, s_1,t_1)$, where
$$V=V_1\cup V_2\setminus \{s_2\}, and$$
$$E=E_1 \cup E_2 \cup\{t_1v: v\in N_{G_2}(s_2)\}\setminus \{s_2v: v\in N_{G_2}(s_2)\}.$$
\item[$(5)\;\;\;\;\;\;$]
Any GSP graph is obtained by finite application of rules $(1)$  through $(4)$.
\end{itemize}
  \end{definition}
  Figure \ref{fig-gspop} illustrates the example of these rules  to construct a GSP graph.
\begin{figure}[!h]
\centering
\includegraphics[width=12cm]{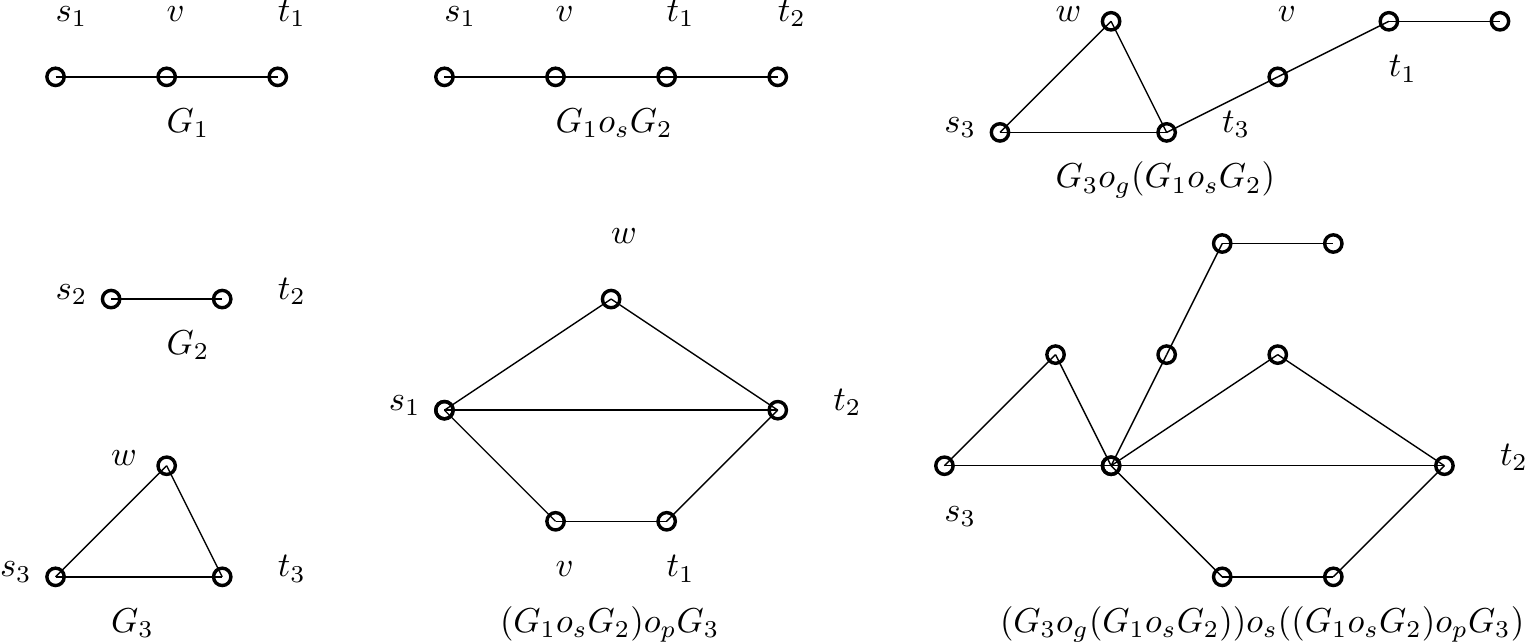}
\caption{Example of operations for constructing GSP graphs}\label{fig-gspop}
\end{figure}

Moreover, the subclass of GSPs obtained by finite application of rules $(1)$ to $(3)$ is called series-parallel or SP class.
 Note that in Figure \ref{fig-gspop} the final  graph $(\hat{G}o_g(G_1o_sG_2))o_s((G_1o_sG_2)o_p\hat{G})$ is a GSP graph but it is not SP.

$p$-graphs are defined as an special  kind of subgraphs of GSP graph which are reviewed next.
\begin{definition}[$p$-graph]

Let $G = (V, E,x,y)$  be a GSP and  $\hat{G} =(\hat{V}, \hat{E},\hat{x},\hat{y})$ subgraph of $G$ satisfy  the following conditions:
\begin{enumerate}
\item
 $\hat{x}=x$ or there exists an edge $ \{u,v\} \in E \setminus \hat{E}$ such that $x \notin \hat{V}$  and $v = \hat{x}\in \hat{V}$.
\item
 $\hat{y}=y$ or there exists an edge $ \{w,z\} \in E \setminus \hat{E}$ such that $w \notin \hat{V}$  and $z = \hat{y}\in \hat{V}$.
\end{enumerate}
then $\hat{G}$ is called a $p$-graph of $G$.
\end{definition}
\begin{remark}
Note that the concept of $p$-graph is defined for SPs in  \cite{kikuno1983linear}, we have extended its definition to GSPs.
\end{remark}
 A generalized series-parallel graph $G$ can be represented by a binary parse tree $T$ which is defined as follows.

 \begin{definition}[Binary Parse Tree for Generalized Series-Parallel Graphs]\cite{kikuno1983linear}
 The binary parse tree of GSP $G$  is defined recursively as follows:
 \begin{enumerate}
 \item
 A tree consisting of a single vertex labeled $(u,v)_i$ is a binary parse tree of primitive  GSP  $G=(\{u,v\},\{\{u,v\}\})$.
 \item
 Let $G=(V,E)$ be a GSP by some composition of two other GSPs $G_1$ and $G_2$, and  $T_1$ and $T_2$ be the binary parse trees of them, respectively. Then, the binary parse tree of $G$ is a tree with the root $r$ labeled as either  $(u,v)_s$, $(u,v)_p$ or $(u,v)_g$ depending on which operation is applied to generate $G$. Vertices $u$ and $v$ are terminals of $G$ and roots of $T_1$ and $T_2$ are the left and right children of $r$, respectively.

 Obviously, for any binary parse tree of GSP $G$, every internal vertex of $G$  has exactly two children and there are $\vert E \vert$  leaves.

Note that when we use a label $(x,y)$ , we do not care about the label being either $i,s,p$ and $g$.

Let $t$ be an internal vertex of $T$ and $\tau(t)$ denote the subtree of $T$ rooted at $t$ . Also the left and right subtree of $t$  are denoted as  $\tau_l(t)$  and  $\tau_r(t)$,  respectively. Then the vertices of $T$  are labeled as follows:

 \begin{enumerate}
 \item
 For each edge $e=\{x,y\} \in E$, there exists exactly one leaf which is labeled by $(x,y)$ in $T$.
 \item
 For each internal vertex $t \in V_T$   labeled as $(x,y)_s$, the root of $\tau_l(t)$  is labeled as $(x,z)$ and the root of $\tau_r(t)$  is labeled as $(z,y)$, where $z$ is some vertex of $V$. These vertices are called $s$-vertices.
 \item
 For each internal vertex $t \in V_T$  is labeled as $(x,y)_p$, the root of $\tau_l(t)$ and $\tau_r(t)$  are labeled as $(x,y)$. These vertices are called $p$-vertices.
 \item
 For each internal vertex $t \in V_T$ labeled $(x,z)_g$, the root of $\tau_l(t)$  is labeled as $(x,z)$,  and the root of $\tau_r(t)$ is labeled as $(z,y)$, where $z$ is a vertex of $V$. These vertices are called $g$-vertices.
 \end{enumerate}
 \end{enumerate}

\end{definition}
Figure \ref{parsetree-g} illustrates a binary parse tree for a  GSP.
\begin{figure}[!h]
\centering
\includegraphics[width=16cm]{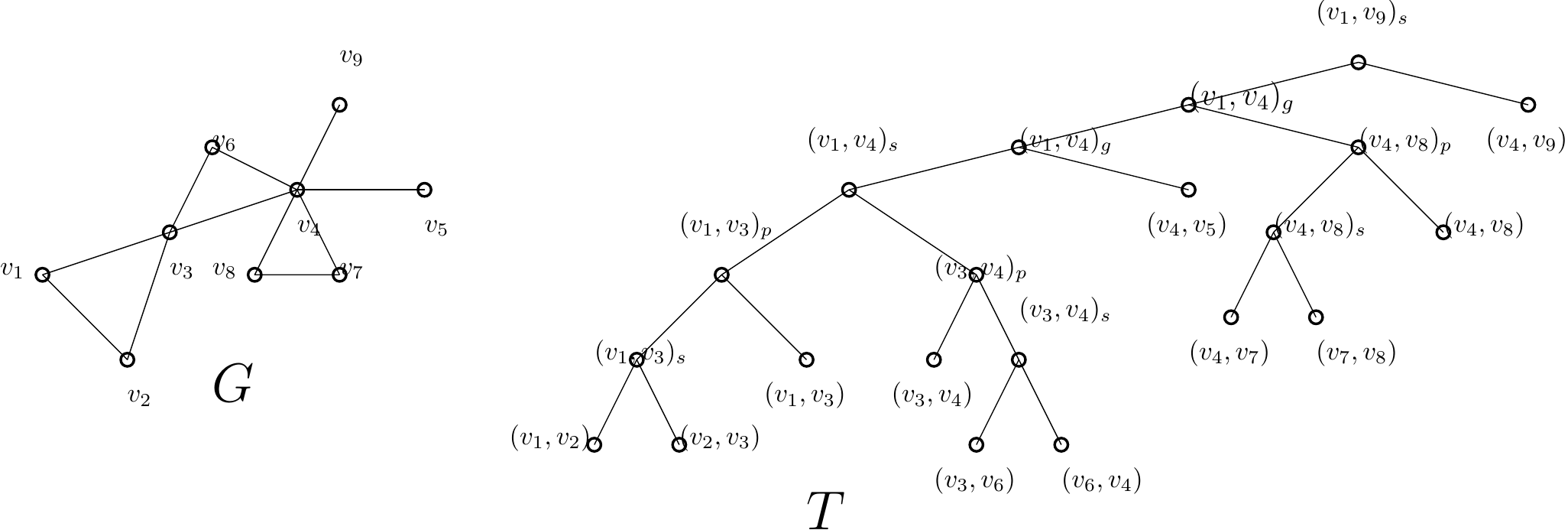}
\caption{ A GSP graph and a binary parse tree of it.}\label{parsetree-g}
\end{figure}

 Note that the binary parse tree of a GSP graph is not  necessarily  unique.
 \begin{lemma}\cite{hopcroft1973dividing}
	For a given  GSP like $G$, a binary parse tree can be found in linear time.
	\end{lemma}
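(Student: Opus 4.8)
The plan is to describe a \emph{reduction} algorithm that inverts the three composition rules $o_s$, $o_p$, $o_g$ and to show it runs in linear time. First I would fix the data structure: store $G$ as doubly linked adjacency lists in which the two records of each edge carry mutual cross-pointers, keep a degree counter at every vertex, and maintain a queue $Q$ of \emph{reducible} non-terminal vertices. A non-terminal $w$ is reducible if it has degree $1$ (a \emph{generalized-series} reduction: delete $w$ and its incident edge, and record that the surviving structure at the former neighbour of $w$ is an $o_g$-composition whose right child is the pendant edge); or degree $2$ with distinct neighbours $x,y$ and $xy \notin E$ (a \emph{series} reduction: replace $wx$ and $wy$ by a new edge $xy$ labelled as the $o_s$-composition of the two deleted edges); or its two incident edges are parallel — together with any detected pair of parallel edges anywhere — which triggers a \emph{parallel} reduction (replace the two parallel copies by one edge labelled as their $o_p$-composition). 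Each reduction creates exactly one internal node of the prospective parse tree, whose children are the already-built parse trees of the edges it consumes, and then re-examines the $O(1)$ vertices whose degree changed, enqueuing any that have become reducible.

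Next I would establish termination and correctness together. The key structural claim — and the main obstacle — is a \emph{progress lemma}: every GSP graph with at least two edges, other than the primitive graph $o_i$ on its two terminals, contains a reducible non-terminal vertex. I would prove this by induction on a parse tree of $G$: in a composition $G = G_1\, o\, G_2$, the part contributed by $G_2$ (after the terminal identifications dictated by $o$) either still contains a low-degree non-terminal vertex, supplied by the induction hypothesis, or $G_2$ is primitive and contributes a parallel pair or a degree-$1$/degree-$2$ vertex at its former terminal. One then notes confluence: applying any valid reduction to a GSP graph yields a smaller GSP graph with the same terminals, and no non-GSP graph ever reduces to $o_i$. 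Hence $G$ is a GSP graph iff the process terminates at the single edge $st$, and when it does, the recorded composition nodes, read in reverse order of creation, form a legal binary parse tree of $G$. Two bookkeeping points need care: the terminals $s,t$ must never be placed on $Q$, and a would-be series reduction that would duplicate an existing edge must instead be carried out as a parallel reduction on the resulting multigraph.

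Finally I would bound the running time. Initialisation — building the cross-linked lists, the degree array and the initial queue — is $O(|V|+|E|)$. Every reduction strictly decreases $|V|+|E|$ (a series or generalized-series reduction removes one vertex and one edge, a parallel reduction removes one edge), so at most $|V|+|E|$ reductions occur; using the cross-pointers, each one deletes the consumed edge records, splices in at most one new record, updates $O(1)$ degrees, creates one tree node, and enqueues $O(1)$ vertices, all in $O(1)$ amortised time. Since GSP graphs are sparse (an easy induction on the parse tree gives $|E| = O(|V|)$), the total cost is $O(|V|+|E|) = O(n)$, as asserted in \cite{hopcroft1973dividing}, which completes the sketch.
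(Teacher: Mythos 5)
The paper offers no proof of this lemma at all---it is imported by citation, and the reference given, \cite{hopcroft1973dividing}, obtains the decomposition via depth-first search and the triconnected-component machinery. Your proposal therefore necessarily takes a different route: it is the classical \emph{reduction} algorithm (series, parallel and pendant contractions applied until a single terminal edge remains, with the parse tree recorded as the reverse of the reduction sequence). That route is legitimate and is in fact the more elementary and more commonly implemented one; what it buys over the DFS approach is that the parse tree falls out of the reduction history for free, and the correctness argument is purely local. Your progress lemma and its induction over the composition rules $o_s$, $o_p$, $o_g$ are sound; note only that in the base case of a parallel composition of two primitive edges the reducible object is a parallel pair between the two \emph{terminals}, so the lemma should be phrased as ``a reducible configuration exists'' rather than ``a reducible non-terminal vertex exists.''

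Two steps of your sketch carry essentially all of the mathematical and algorithmic content, and both are asserted rather than argued. First, the closure statement that \emph{any} valid reduction applied to a GSP graph yields a GSP graph with the same terminals is exactly the confluence property that makes the greedy strategy complete; without it, termination at the edge $st$ is only guaranteed for \emph{some} reduction order, not for the order your queue happens to choose. This is a known fact, but it needs either a commutation argument on parse trees or an appeal to the partial-$2$-tree characterization, and your one-sentence mention of ``confluence'' does not supply it. Second, the $O(1)$ amortised cost per reduction hinges on detecting, after a series reduction creates a new edge $xy$, whether $x$ and $y$ were already adjacent; a naive adjacency-list scan costs $\Theta(\deg x)$, and the standard remedy (deferring the parallel merge until an endpoint's degree drops to $2$, or bucketing candidate pairs) is precisely the delicate part of the linear-time bound. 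Neither issue invalidates the approach---both are handled in the literature the lemma points to---but as written your proof delegates its two hardest steps to the same sources the paper already cites.
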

\section{Basic blocks of algorithms for finding minimum $[1,2]$-set}\label{algorithm}
 We are given a GSP $G$, at the first step, this algorithm find  a parse tree like $T$ of given $G$. Each subtree of parse tree is corresponding to a $p$-graph of $G$.
Four procedures ProcessLeaf, ProcessSvertex,   ProcessPvertex and ProcessGvertex form the basic block of algorithm.
Our goal is to find a $\gamma_{[1,2]}$-set for a given graph $G$.

\subsection*{Definitions}
Let $t$ is a vertex of $T$ and $\hat{G}$ is a $p$-graph to subtree with root $t$ of $T$. We define the following:
\begin{itemize}
\item[$\bullet$]
For any vertex $v$ of $T$, the set $ch(v)$ consist of all children of $v$. In this parse tree, if $v$ is a leaf,  $ch(v)$ will be an empty set and if it is internal vertex, $ch(v)$ will contain two elements.
\item[$\bullet$]
Let $(x,y)$ be the label of $v$, $S(x_{i,j},y_{i',j'})$ is a subset of $V(\hat{G})$ like $D$, such that the number  vertices of $V(\hat{G})\cap D$ and $V(G)\setminus V(\hat{G})\cap D$ which  dominate $x$ be $i$  and $j$, respectively. Similarly  $i'$ vertex of $V(\hat{G})\cap D$ and $j'$ vertex of $V(G)\setminus V(\hat{G})\cap D$ dominate $y$. In this algorithm if $i=j=0$ then $x \in D$ and if $i'=j'=0$ then $y \in D$.
In a similar way, we can define $S_\ell(x_{i_\ell,j_\ell},y_{i'_\ell,j'_\ell})$ and $S_r(x_{i_r,j_r},y_{i'_r,j'_r})$ for the root of left and right subtree of $v$ respectively.
\item[$\bullet$]
The function $Minsize$ received a number of sets as input and return a set among them with minimum cardinality.
\end{itemize}

This algorithm in each step of traversing binary parse tree, for each visiting vertex $v$ of $T$ recall one of the  procedures ProcessLeaf, ProcessSvertex,   ProcessPvertex and ProcessGvertex to find a subset of $S(x_{i,j},y_{i',j'})\subseteq V(\hat{G})$ such that junction of $S(x_{i,j},y_{i',j'})$ and a subset $V(G)\setminus V(\hat{G})$ form a $\gamma_{[1,2]}$-set for a $p$-graph $\hat{G}$ of $G$.
 So the following cases can not occur for every vertex $x\in V(\hat{G})$.
\begin{itemize}
\item[1.]
$x$ is dominated by one vertex of $V(\hat{G})$ and two vertices of $V(G)\setminus V(\hat{G})$.
\item[2.]
$x$ is dominated by two vertices of $V(\hat{G})$ and one vertex of $V(G)\setminus V(\hat{G})$.
\item[3.]
$x$ is dominated by two vertices of $V(\hat{G})$ and two vertices of $V(G)\setminus V(\hat{G})$.
\end{itemize}
So  $S(x_{i,j},y_{i',j'})$, will be computed for $(i,j),(i',j')\in M$ such that $M=\{(0,0),(0,1),(0,2),(1,0)(1,1),(2,0)\}$.

\subsection*{Procedure for Leaves of $T$}
Input of this procedure is a leaf of $v\in V_T$  labeled by $(x,y)_i$ and output $S(x_{i,j},y_{i',j'})$ for the leaf labeled by $(x,y)_i$ and for all $(i,j),(i',j' )\in M$.
We compute $S(x_{i,j},y_{i',j'})$ for leaves using  table \ref{table1}.
 \begin{algorithm}[h!]
\begin{algorithmic}[1]
\Procedure{ProcessLeaf}{$x,y$}
\State $S(x_{0,0},y_{0,0})\gets \{x,y\} $
                    \State $S(x_{0,0},y_{1,0})\gets \{x\} $
                    \State $S(x_{0,0},y_{1,1})\gets \{x\} $
                    \State $S(x_{1,0},y_{0,0})\gets \{y\} $
                    \State $S(x_{1,1},y_{0,0})\gets \{y\} $
		            \ForAll{$j=0,1,2\;and\;j'=1,2$ }
                    \State $S(x_{0,j},y_{0,j'})\gets \emptyset $
                    \EndFor
                    \State $S(x_{0,1},y_{0,0})\gets \emptyset $
                    \State $S(x_{0,2},y_{0,0})\gets \emptyset $
                    \ForAll{$(i,j)\in M$}
                    \State $S(x_{i,j},y_{2,0})\gets NaN$
                    \State $S(x_{2,0},y_{i,j})\gets NaN$
                    \EndFor
                    \ForAll{$(i,j)\in M \setminus \{(0,0),(2,0)\}$}
                    \State $S(x_{i,j},y_{1,0})\gets NaN $
                    \State $S(x_{i,j},y_{1,1})\gets NaN $
                    \State $S(x_{1,0},y_{i,j})\gets NaN $
                    \State $S(x_{1,1},y_{i,j})\gets NaN $
                    \EndFor
\EndProcedure
	\end{algorithmic}
\end{algorithm}

 \begin{table}[h]
\begin{center}
\caption{$S(x_{i,j},y_{i',j'})$ for different values of $(i,j)$ in rows and $(i',j')$ in columns}\label{table1}
\begin{tabular}{lllllll}
\hline
  &$(0,0)$ & $(0,1)$ & $(0,2)$ & $(1,0)$ & $(1,1)$ & $(2,0)$\\
\hline

  $(0,0)$ & $\{x,y\}$ & $\emptyset$ & $\emptyset$ & $\{y\}$ & $\{y\}$ & NaN\\
  $(0,1)$ & $\emptyset$ & $\emptyset$ & $\emptyset$ & NaN & NaN & NaN\\
  $(0,2)$ & $\emptyset$ & $\emptyset$ & $\emptyset$ & NaN & NaN & NaN\\
  $(1,0)$ & $\{x\}$ & NaN & NaN & NaN & NaN & NaN\\
  $(1,1)$ & $\{x\}$ & NaN & NaN & NaN & NaN & NaN\\
  $(2,0)$ & NaN & NaN & NaN & NaN & NaN & NaN\\

\end{tabular}

\end{center}
\end{table}

A leaf of tree which is labeled by $(x,y)$  is corresponding to an edge $\{x,y\}$  of GSP graph $G$. Following cases will occur to compute $S(x_{i,j},y_{i',j'})$.

\begin{itemize}
 \item
 $i=j=0$ and $i'=j'= 0$: $x,y \in D$ and  $S(x_{i,j},y_{i',j'})=\{x,y\}$.
 \item
$i=1$, $i'=j'= 0$: $x$ is dominated by $y$ and  $S(x_{i,j},y_{i',j'})=\{y\}$.
 \item
$i'=1$, $i=j= 0$: $y$ is dominated by $x$ and  $S(x_{i,j},y_{i',j'})=\{x\}$.
\item
 $i=i'=0$,  $j\neq 0$ and $j'\neq 0$: $x$ and $y$ is dominated by one vertex or two vertices  of vertex $V(G)\setminus V(\hat{G})$, so $x,y \notin D$ and  $S(x_{i,j},y_{i',j'})$ is an empty set.
\item
$i=0$ and $j \in \{1,2\}$: It implies that $x\notin D$  and it is dominated by some vertices $V(G)\setminus V(\hat{G})$, so $y\notin D$ and $S(x_{i,j},y_{0,0})$ is undefinable. Similarly If $i'=0$,  $j' \in \{1,2\}$ and $i=j=0$ $S(x_{i,j},y_{i',j'})$ is undefinable.
\item
$i=1$, $i'\neq 0$ or $j'\neq 0$:  $S(x_{i,j},y_{i',j'})$ is undefinable.  $i=1$ implies that $x$ is dominated by exactly one vertex of $\hat{G}$, this vertex is $y$. So $y\in D$ and $i'=j'=0$, similarly for $i'=1$, if $i\neq 0$ or $j\neq 0$,  $S(x_{i,j},y_{i',j'})$ is undefinable.
\item
$i=2$, since there is exactly one vertex set of $\hat{G}$ to dominating $x$, $S(x_{i,j},y_{i',j'})$ is undefinable and similarly for $i'=2$,  $S(x_{i,j},y_{i',j'})$ is undefinable.

\end{itemize}

\subsection*{Sets for $s$-vertices of $T$}
Let $t$ is a vertex of $T$ is labeled by $(x,y)_s$. In this procedure, the set $S(x_{i,j},y_{i',j'})$ will computed for a given vertex $x$, $y$ and common vertex $z$. Assume the sets corresponding to  $\tau_l(t)$ and $\tau_r(t)$ are $S_\ell(x_{i_\ell,j_\ell},z_{i'_\ell,j'_\ell})$ and $S_r(z_{i_r,j_r},y_{i'_r,j'_r})$ respectively.

\begin{algorithm}[h]
\begin{algorithmic}[1]
\Procedure{ProcessSvertex}{$x,z,y$}
 \ForAll{$(i,j),(i',j')\in M$}
                    \State $setlist\gets \emptyset$
                    \ForAll{$(i'_\ell,j'_\ell)\in M$}
                    \State Add $S_\ell(x_{i,j},z_{i'_\ell,j'_\ell})\cup S_r(z_{j'_\ell,i'_\ell},y_{i',j'})$ to $setlist$
                    \EndFor
                    \State $S(x_{i,j},y_{i',j'})\gets Minsize(setlist)$
                    \EndFor
\EndProcedure
	\end{algorithmic}
\end{algorithm}

Let root of $\tau_l(t)$ and $\tau_r(t)$ labeled by $(x,z)$ and $(z,y)$ respectively, for some $z \in V$.
Let $S_\ell(x_{i_\ell,j_\ell},z_{i'_\ell,j'_\ell})$ and $S_r(z_{i_r,j_r},y_{i'_r,j'_r})$ be sets associated with the vertex $(x,z)$ and $(z,y)$ of $T$.
Now for each $s$-vertex of $T$, we can compute sets $S(x_{i,j},y_{i',j'})$ for all $(i,j),(i',j') \in M$ as follow,
\begin{equation} \label{svertex}
S(x_{i,j},y_{i',j'})=Minsize_{(i,j),(i',j')\in M}\{S_\ell(x_{i,j},z_{i'_\ell,j'_\ell}) \cup s_r(x_{j'_\ell,i'_\ell},y_{i',j'})\}
\end{equation}
To prove the correctness of formula \ref{svertex}, let $\tau_l(t)$, $\tau_r(t)$ and $\tau(t)$ represent vertices of $T$ which is corresponding  $p$-graphs $G_1=(V_1,E_1,x,z)$, $G_2=(V_2,E_2,z,y)$ and $\hat{G}=G_1o_s G_2=(\hat{V},\hat{E},x,y)$ respectively. Figures  \ref{smethod1}-\ref{smethod0} show the computation $S(x_{i,j},y_{i',j'})$ when $z$ is dominated by vertices of $G_1$ or vertices of $G_2$. According to the number of vertices in $D \cap V(G_1)$, $D \cap (V(G)\setminus V(G_1))$, $D \cap V(G_2)$ and $D\cap (V(G)\setminus V(G_2))$ that dominate $z$, the following cases will occur.
\begin{itemize}
\item[$\bullet$]
$z \in D$, in this case ${(0,0)\in M}$, see Figure \ref{smethod0}.

\begin{figure}[!h]
\centering
\includegraphics[width=6cm]{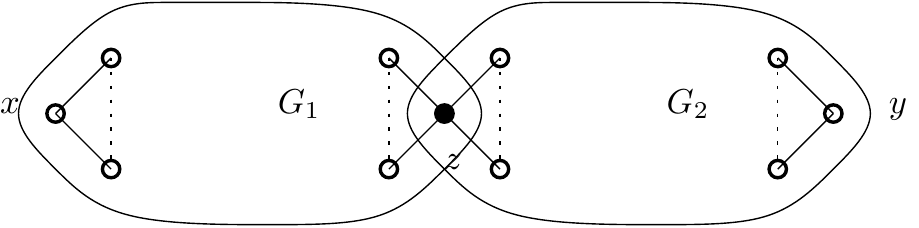}
\caption{ $z \in D$.}\label{smethod0}
\end{figure}

\item[$\bullet$]
$z\notin D$, if $z$ is dominated by  one vertex of $G_1$ then $(0,1)\in M$, see Figure \ref{smethod1}.a or if it is dominated by  one vertex of $G_2$ then then $(1,0)\in M$, see Figure \ref{smethod1}.b.

\begin{figure}[!h]
\centering
\includegraphics[width=6cm]{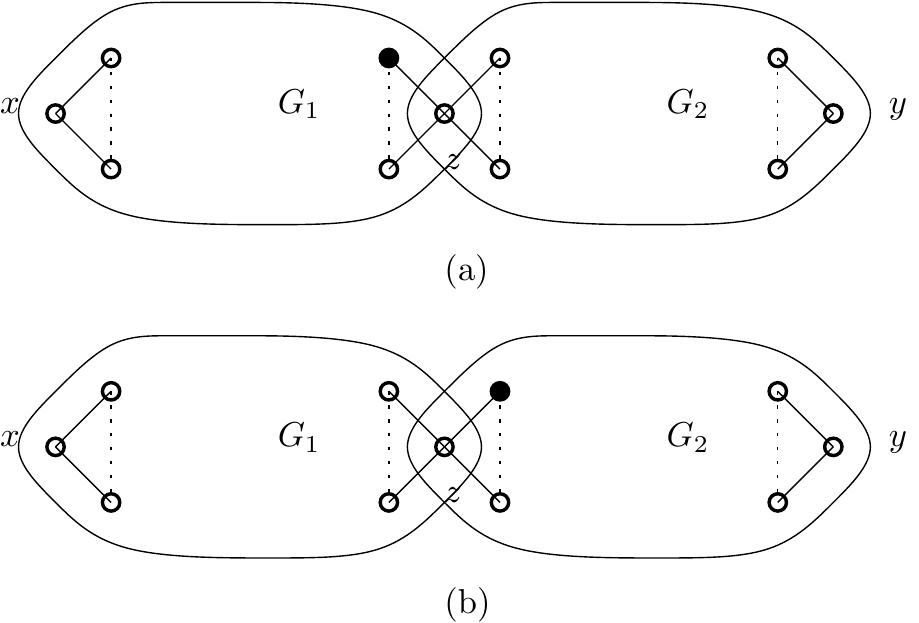}
\caption{$z$ is dominated  once.}\label{smethod1}
\end{figure}

\item[$\bullet$]
$z\notin D$, if it is dominated by  two vertices of $G_1$ then ${(0,2)\in M}$, see Figure \ref{smethod2}.a or if it is dominated by  two vertices of $G_2$ then ${(2,0)\in M}$, see Figure\ref{smethod2}.b.

\begin{figure}[!h]
\centering
\includegraphics[width=6cm]{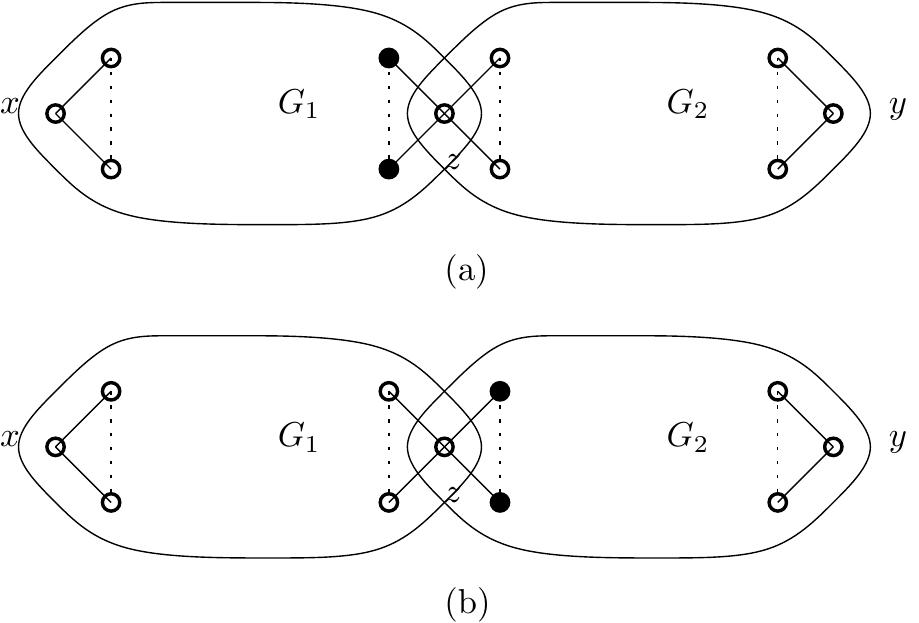}
\caption{$z$ is dominated by two vertices of $V(G_1)$ or $V(G_2)$.}\label{smethod2}
\end{figure}
\item[$\bullet$]
$z\notin D$, if it is dominated by  one vertex of $G_1$ and  it is dominated by  one vertex of $G_2$ then ${(1,1)\in M}$, see Figure\ref{smethod3}.

\begin{figure}[!h]
\centering
\includegraphics[width=6cm]{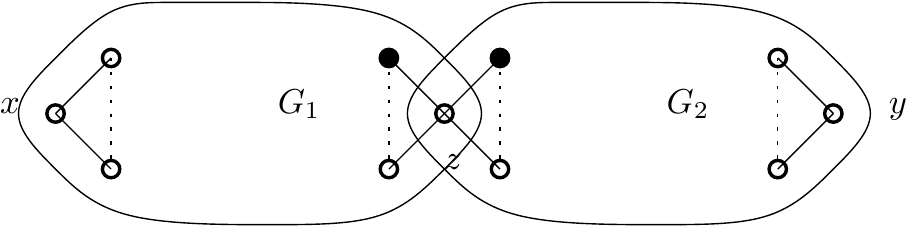}
\caption{$z$ is dominated by one vertex of $V(G_1)$ and the other of  $V(G_2)$.}\label{smethod3}
\end{figure}
\end{itemize}

\subsection*{Sets for $p$-vertices of $T$}
Let $t$ is a vertex of $T$ is labeled by $(x,y)_p$, in this procedure, the set $S(x_{i,j},y_{i',j'})$ will computed for a given vertex $x$, $y$. The sets corresponding to  $\tau_l(t)$ and $\tau_r(t)$ are $S_\ell(x_{i_\ell,j_\ell},y_{i'_\ell,j'_\ell})$ and $S_r(x_{i_r,j_r},y_{i'_r,j'_r})$ respectively.

\begin{algorithm}[h!]
\begin{algorithmic}[1]
\Procedure{ProcessPvertex}{$x,y$}
 \ForAll{$(i,j),(i',j')\in M$}
                    \State $setlist[(i,j),(i',j')]\gets \emptyset$
                    \EndFor
      \ForAll{$(i,j)\in \{(0,0),(0,1),(0,2)\}$}
                    \ForAll{$(i',j')\in \{(0,0),(0,1),(0,2)\}$}
                    \State $setlist[(i,j),(i',j')]\gets S_\ell(x_{i,j},y_{i',j'})\cup S_r(x_{i,j},y_{i',j'})$
                    \EndFor

                    \State $
                    \begin{aligned}
                    Setlist[(i,j),(1,0)]\gets &\{S_\ell(x_{i,j},y_{1,0})\cup S_r(x_{i,j},y_{0,1}),S_\ell(x_{i,j},y_{0,1})\cup S_r(x_{i,j},y_{1,0})\}
                    \end{aligned}$

                    \State $
                    \begin{aligned}
                    Setlist[(i,j),(1,1)]\gets &\{S_\ell(x_{i,j},y_{1,1})\cup S_r(x_{i,j},y_{0,2}),S_\ell(x_{i,j},y_{0,2})\cup S_r(x_{i,j},y_{1,1}) \}
                    \end{aligned}$

                    \State $
                    \begin{aligned}
                    Setlist[(i,j),(2,0)]\gets & \{S_\ell(x_{i,j},y_{2,0})\cup S_r(x_{i,j},y_{0,2}),S_\ell(x_{i,j},y_{0,2})\cup S_r(x_{i,j},y_{2,0}),\\
                    &S_\ell(x_{i,j},y_{1,1})\cup S_r(x_{i,j},y_{1,1}) \}
                    \end{aligned}$

                   \State $
                    \begin{aligned}
                    Setlist[(1,0),(i,j)]\gets & \{S_\ell(x_{1,0},y_{i,j})\cup S_r(x_{0,1},y_{i,j}),S_\ell(x_{0,1},y_{i,j})\cup S_r(x_{1,0},y_{i,j}) \}
                     \end{aligned}$

                    \State $
                     \begin{aligned}
                     Setlist[(1,1),(i,j)]\gets &\{S_\ell(x_{1,1},y_{i,j})\cup S_r(x_{0,2},y_{i,j}),S_\ell(x_{0,2},y_{i,j})\cup S_r(x_{1,1},y_{i,j}) \}
                       \end{aligned}$

                    \State $
                    \begin{aligned}
                    Setlist[(2,0),(i,j)]\gets & \{S_\ell(x_{2,0},y_{i,j})\cup S_r(x_{0,2},y_{i,j}),S_\ell(x_{0,2},y_{i,j})\cup S_r(x_{2,0},y_{i,j}),\\
                    & S_\ell(x_{1,1},y_{i,j})\cup S_r(x_{1,1},y_{i,j})\}
                     \end{aligned}$
                    \EndFor
    	
                    \ForAll{$(i,j),(i',j')\in \{(0,1),(1,0)$}
                    \State Add $S_\ell(x_{i,j},y_{i',j'})\cup S_r(x_{j,i},y_{j',i'})$ to $setlist[(i,j),(i',j')]$
                    \EndFor

                    \ForAll{$(i,j)\in \{(0,1),(1,0)$}
                    \State Add $S_\ell(x_{i,j},y_{1,1})\cup S_r(x_{j,i},y_{0,2})$ to $setlist[(i,j),(1,1)]$
                    \State Add $S_\ell(x_{i,j},y_{0,2})\cup S_r(x_{j,i},y_{1,1})$ to $setlist[(i,j),(1,1)]$

                    \State Add $S_\ell(x_{i,j},y_{0,2})\cup S_r(x_{j,i},y_{2,0})$ to $setlist[(1,0),(2,0)]$
                    \State Add $S_\ell(x_{i,j},y_{2,0})\cup S_r(x_{j,i},y_{0,2})$ to $setlist[(1,0),(2,0)]$
                    \State Add $S_\ell(x_{i,j},y_{1,1})\cup S_r(x_{j,i},y_{1,1})$ to $setlist[(1,0),(2,0)]$

                    \State Add $S_\ell(x_{1,1},y_{i,j})\cup S_r(x_{0,2},y_{j,i})$ to $setlist[(1,1),(1,0)]$
                    \State Add $S_\ell(x_{0,2},y_{i,j})\cup S_r(x_{1,1},y_{j,i})$ to $setlist[(1,1),(1,0)]$
                    \EndFor
                    \State
                        $
                        \begin{aligned}
                            setlist[(1,1),(1,1)] \gets & \{S_\ell(x_{1,1},y_{1,1})\cup S_r(x_{0,2},y_{0,2}),S_\ell(x_{1,1},y_{0,2})\cup S_r(x_{0,2},y_{1,1}),\\
                             & S_\ell(x_{0,2},y_{1,1})\cup S_r(x_{1,1},y_{0,2}),S_\ell(x_{0,2},y_{0,2})\cup S_r(x_{1,1},y_{1,1})\}
                        \end{aligned}
                        $
 \algstore{myalg}
	\end{algorithmic}
\end{algorithm}

\begin{algorithm}
	\begin{algorithmic}
		\algrestore{myalg}

                    \ForAll{$(i',j')\in \{(0,2),(2,0),(1,1)\}$}
                    \State Add $S_\ell(x_{1,1},y_{i',j'})\cup S_r(x_{0,2},y_{j',i'})$ to $setlist[(1,1),(2,0)]$
                    \State Add $S_\ell(x_{1,1},y_{i',j'})\cup S_r(x_{0,2},y_{j',i'})$ to $setlist[(1,1),(2,0)]$
                    \EndFor

                    \ForAll{$(i,j)\in \{(0,2),(2,0),(1,1)\}$}
                    \State Add $S_\ell(x_{i,j},y_{1,1})\cup S_r(x_{j,i},y_{0,2})$ to $setlist[(2,0),(1,1)]$
                    \State Add $S_\ell(x_{i,j},y_{0,2})\cup S_r(x_{j,i},y_{1,1})$ to $setlist[(2,0),(1,1)]$
                    \EndFor

                    \ForAll{$,(i,j)(i',j')\in \{(0,2),(2,0),(1,1)\}$}
                    \State Add $S_\ell(x_{i,j},y_{i',j'})\cup S_r(x_{j,i},y_{j',i'})$ to $setlist[(2,0),(2,0)]$
                    \EndFor

                    \ForAll{$(i,j),(i',j')\in M$}
                    \State $S(x_{i,j},y_{i',j'})\gets Minsize(setlist[(i,j),(i',j')])$
                    \EndFor
\EndProcedure
\end{algorithmic}
\end{algorithm}
\newpage
Since the number of vertices of $V(G_1)$ and $V(G_2)$ dominate  $x$ make changes in value of  $i$ and similarly the number of vertices of $V \setminus V(G_1)$ and $V \setminus V(G_2)$ make changes in value of  $j$, for each $(i,j) \in M$ we describe the method for finding $S(x_{i,j},y_{i',j'})$. It is enough to find a relation between values of $(i,j)$,  $(i_\ell,j_\ell)$ and $(i_r,j_r)$ in formula \ref{pvertex}.
\begin{equation} \label{pvertex}
S(x_{i,j},y_{i',j'})=Minsize\{S_\ell(x_{i_\ell,j_\ell},y_{i'_\ell,j'_\ell}) \cup s_r(x_{i_r,j_r},y_{i'_r,j'_r})\}.
\end{equation}
 To find this relation, let $\tau_l(t)$, $\tau_r(t)$ and $\tau(t)$ of $T$ corresponding to $p$-graphs $G_1=(V_1,E_1,x,y)$, $G_2=(V_2,E_2,x,y)$ and $\hat{G}=G_1 o_pG_2=(\hat{V},\hat{E},x,y)$ respectively.
   Now according to $(i,j)$ (resp. $(i',j')$) values for $(i_\ell,j_\ell)$ and $(i_r,j_r)$ (resp. $(i'_\ell,j'_\ell)$ and $(i'_r,j'_r)$) are determined.
 \begin{itemize}
 \item[$\bullet$]
 $x \in D$, in formula \ref{pvertex}, if $i=j=0$ then $i_\ell=j_\ell=i_r=j_r=0$.
 We define: $$S(x_{0,0},y_{i',j'})=Minsize\{s_\ell(x_{0,0},y_{i'_\ell,j'_\ell}) \cup s_r(x_{0,0},y_{i'_r,j'_r})\}.$$
  
\begin{figure}[!h]
\centering
\includegraphics[width=6cm]{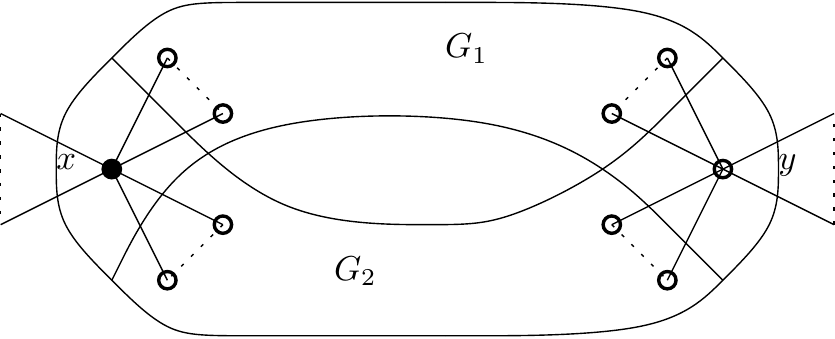}
\caption{ $x \in D$}\label{pmethod0}
\end{figure}

 \item[$\bullet$]
 $x \notin D$, if $x$ is not dominated by any vertex of $V(\hat{G})$ then it is dominated by one vertex  or two vertices of $V(G)\setminus V(\hat{G})$,   see Figure \ref{pmethod1}.
 In the left side of formula \ref{pvertex}, $i=0$ and $j \neq 0$, so  $i_\ell=i_r=0$ and $j_\ell=j_r=j$.

We define $S(x_{0,1},y_{i',j'})=Minsize_{i',j'\in\{0,1,2\}}\{S_\ell(x_{0,1},y_{i'_\ell,j'_\ell}) \cup s_r(x_{0,1},y_{i'_r,j'_r})\}$, see Figure \ref{pmethod1}.a, and
$S(x_{0,2},y_{i',j'})=Minsize\{S_\ell(x_{0,2},y_{i_\ell',j_r'}) \cup s_r(x_{0,2},y_{i_r',j_r'})\}$, see Figure \ref{pmethod1}.b.
\begin{figure}[!h]
\centering
\includegraphics[width=6cm]{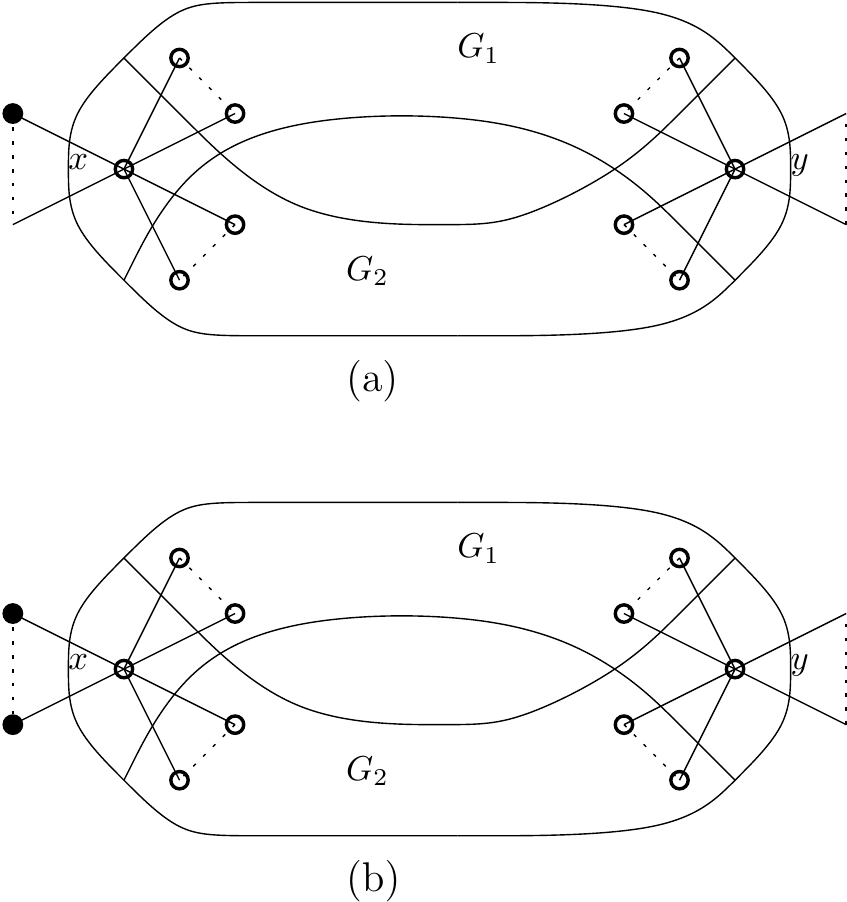}
\caption{ $x$ is not dominated one vertex  or two vertices of $V(G) \setminus V(\hat{G})$.}\label{pmethod1}
\end{figure}

\item[$\bullet$]
 $x \notin D$, if $x$ is dominated by exactly one vertex of $V(\hat{G})$ then, it is  dominated by a vertex of $G_1$ that is out of $G_2$, see Figure \ref{pmethod2}.a
 or it is dominated by  a vertex of $G_2$ that is out of $G_1$, see Figure \ref{pmethod2}.b.

 It means that in formula \ref{pvertex}, if $i=1$ and $j=0$ then $i_\ell=j_r=0$ and $j_\ell=i_r=1$ or $i_\ell=j_r=1$ and $j_\ell=i_r=0$. We define: $S(x_{1,0},y_{i',j'})=Minsize_{i',j'\in\{0,1,2\}}\{S_\ell(x_{1,0},y_{i_\ell',j_r'}) \cup s_r(x_{0,1},y_{i_r',j_r'}),S_\ell(x_{0,1},y_{i_\ell',j_r'}) \cup s_r(x_{1,0},y_{i_r',j_r'})\}.$
\begin{figure}[!h]
\centering
\includegraphics[width=6cm]{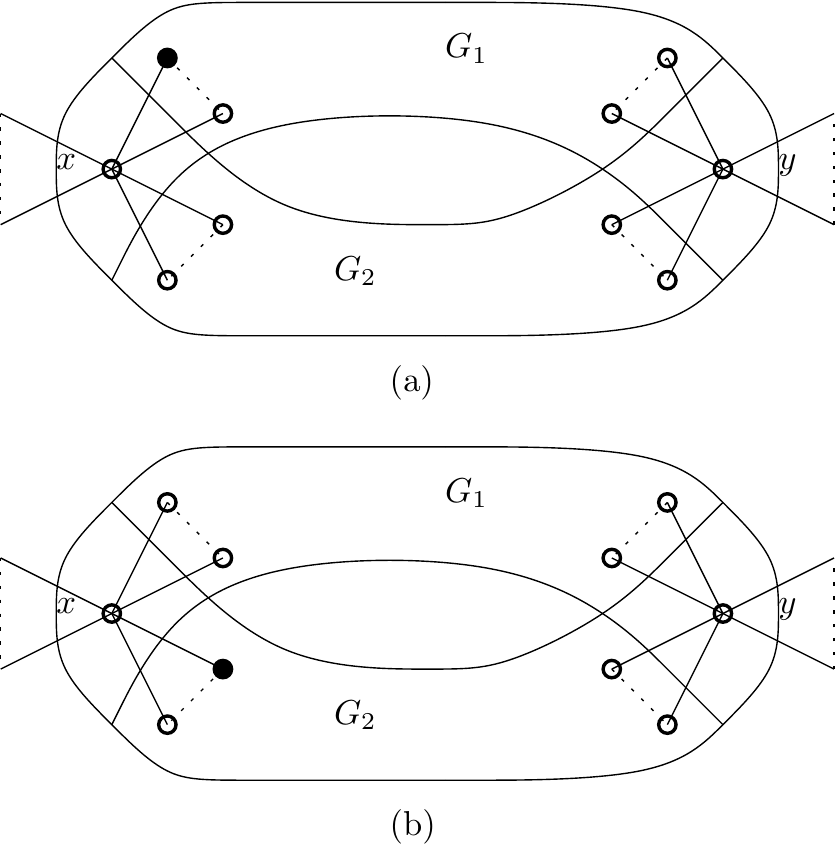}
\caption{ $x$ is  dominated by exactly one vertex of $V(\hat{G})$.}\label{pmethod2}
\end{figure}
\item[$\bullet$]
 $x \notin D$, if  $x$ is  dominated by exactly two vertices of $V(\hat{G})$, then one of the following cases occurs:
  \begin{itemize}
  \item[1.]
  $x$ is  dominated by two vertex of $G_1$ that they are out of $G_1$, see Figure \ref{pmethod3}.a.
  \item[2.]
  $x$ is dominated by  two vertices of $G_2$ that they are out of $G_1$, see Figure \ref{pmethod3}.b.
  \item[3.]
 $x$ is dominated by  exactly one vertex of $G_1$ and exactly one vertex of $G_2$, see Figure \ref{pmethod3}.c.
  
\end{itemize}
 In formula \ref{pvertex}, if $i=2$ and $j=0$ then $(i_\ell,j_\ell(=(0,2)$,  $(2,0)$ or  $(1,1)$  and $i_r=j_\ell,\;j_r=i_\ell$. So that, we define:
 \begin{equation*}
\begin{array}{ll}
 S(x_{2,0},y_{i',j'})=Minsize&\{S_\ell(x_{2,0},y_{i_\ell',j_r'}) \cup s_r(x_{0,2},y_{i'_r,j'_r})\\
&S_\ell(x_{0,2},y_{i_\ell',j_r'}) \cup s_r(x_{2,0},y_{i_r',j_r'}),\\
 &S_\ell(x_{1,1},y_{i_\ell',j_r'}) \cup s_r(x_{1,1},y_{i_r',j_r'})\}.
\end{array}
\end{equation*}

\begin{figure}[!h]
\centering
\includegraphics[width=6cm]{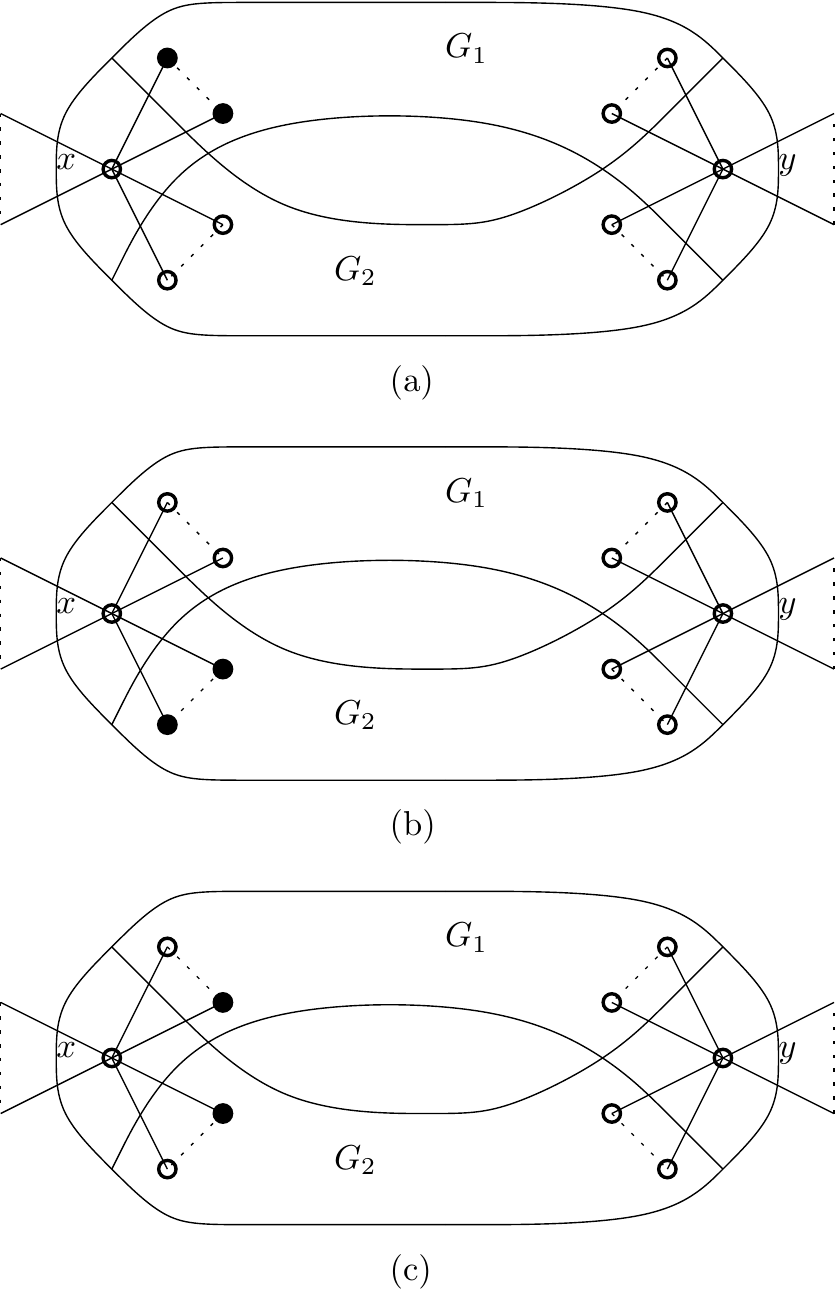}
\caption{$x$ is  dominated by exactly two vertices of $V(\hat{G})$.}\label{pmethod3}
\end{figure}

\item[$\bullet$]
 $x \notin D$, if $x$ is  dominated by exactly one vertex of $\hat{G}$ and exactly one vertex out of $\hat{G}$ then one of the cases hold
 \begin{itemize}
 \item[1.]
  $x$ is  dominated by one vertex of $G_1$ and a vertex out of  $G_1 o_pG_2$ see Figure \ref{pmethod4}.a.
  \item[2.]
   $x$ is  dominated by one vertex of $G_2$ and a vertex out of  $G_1 o_pG_2$ see Figure \ref{pmethod4}.b.
   \end{itemize}

 In formula \ref{pvertex}, if $i=1$ and $j=1$ then $(i_\ell,j_\ell)=(1,1),(i_r,j_r)=(0,2)$ or $(i_\ell,j_\ell)=(0,2),(i_r,j_r)=(1,1)$. So that we define: $S(x_{1,1},y_{i',j'})=Minsize\{S_\ell(x_{1,1},y_{i'_\ell,j'_\ell}) \cup s_r(x_{0,2},y_{i'_r,j'_r}),S_\ell(x_{0,2},y_{i'_\ell,j'_\ell}) \cup s_r(x_{1,1},y_{i'_r,j'_r})\}.$
\begin{figure}[!h]
\centering
\includegraphics[width=6cm]{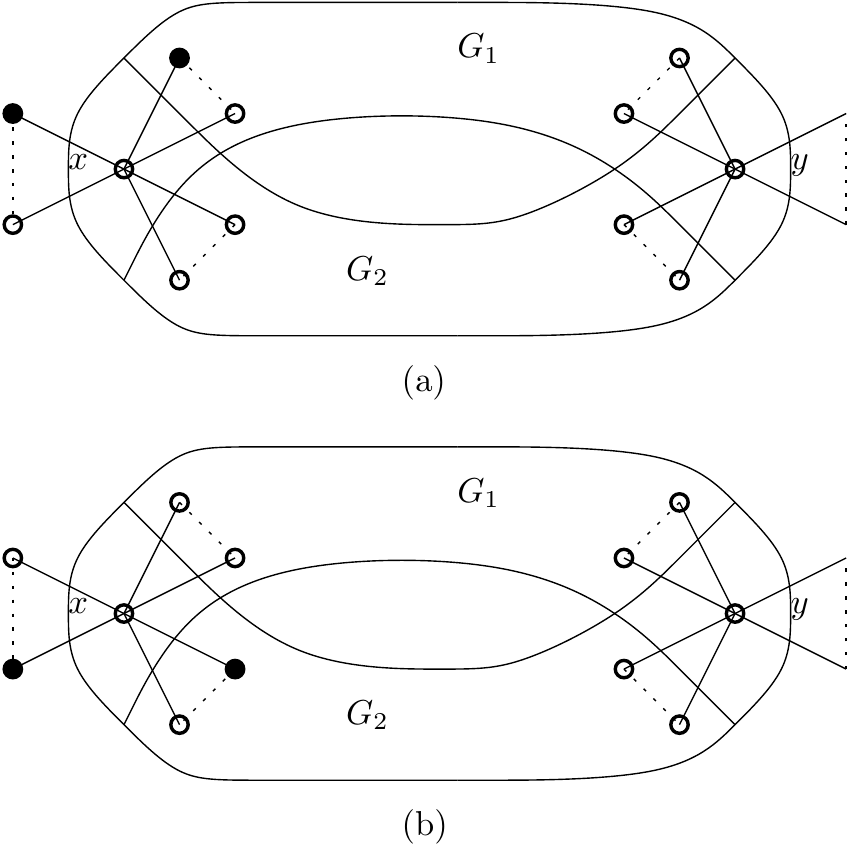}
\caption{$x$ is  dominated by exactly one of $V(\hat{G})$ and one vertex of $V(G) \setminus V(\hat{G})$.}\label{pmethod4}
\end{figure}

 \end{itemize}

\subsection*{Sets for $g$-vertices of $T$}
Let $t$ is a vertex of $T$ is labeled by $(x,y)_g$. In this procedure, the set $S(x_{i,j},y_{i',j'})$ will computed for a given vertex $x$, $y$. The sets corresponding to  $\tau_l(t)$ and $\tau_r(t)$ are $S_\ell(x_{i_\ell,j_\ell},y_{i'_\ell,j'_\ell})$ and $S_r(x_{i_r,j_r},y_{i'_r,j'_r})$ respectively.
\begin{algorithm}[h!]
\begin{algorithmic}[1]
\Procedure{ProcessGvertex}{$x,y,z$}
 \ForAll{$(i,j)\in M$}
                    \State $setlist[i,j]\gets \emptyset$
                      \EndFor

                    \ForAll{$(i,j)\in M$}
                    \ForAll{$(i'_r,j'_r)\in M$}
                    \ForAll{$(i',j')\in \{(0,0),(0,1),(0,2)\}$}
                    \State Add $S_\ell(x_{i,j},y_{i',j'})\cup S_r(y_{i',j'},z_{i'_r,j'_r})$ to $setlist[i',j']$
                     \EndFor
                    \State Add $S_\ell(x_{i,j},y_{0,1})\cup S_r(y_{1,0},z_{i'_r,j'_r})$ to $setlist[1,0]$
                    \State Add $S_\ell(x_{i,j},y_{1,0})\cup S_r(y_{0,1},z_{i'_r,j'_r})$ to $setlist[1,0]$

                    \State Add $S_\ell(x_{i,j},y_{1,1})\cup S_r(y_{0,2},z_{i'_r,j'_r})$ to $setlist[1,1]$
                    \State Add $S_\ell(x_{i,j},y_{0,2})\cup S_r(y_{1,1},z_{i'_r,j'_r})$ to $setlist[1,1]$

                    \State Add $S_\ell(x_{i,j},y_{2,0})\cup S_r(y_{0,2},z_{i'_r,j'_r})$ to $setlist[2,0]$
                    \State Add $S_\ell(x_{i,j},y_{0,2})\cup S_r(y_{2,0},z_{i'_r,j'_r})$ to $setlist[2,0]$
                    \State Add $S_\ell(x_{i,j},y_{1,1})\cup S_r(y_{1,1},z_{i'_r,j'_r})$ to $setlist[2,0]$
                    \EndFor

                    \State $S(x_{i,j},y_{i',j'})\gets Minsize(setlist[i',j'])$
                    \EndFor
\EndProcedure
	\end{algorithmic}
\end{algorithm}
Let the roots of $\tau_l(t)$ and $\tau_r(t)$ are labeled by $(x,y)$ and $(y,z)$ respectively, for some $z \in V$.
Let $s_\ell(x_{i_\ell,j_\ell},y_{i'_\ell,j'_\ell})$ and $s_r(y_{i_r,j_r},z_{i'_r,j'_r})$ be  associated with the vertex $(x,y)$ and $(y,z)$ of $T$.
 If a vertex  $w \in V(\hat{G})$ (resp. $V(G)\setminus V(\hat{G})$) dominate $x$ then, $w\in V(G_1)$ (resp.  $V(G)\setminus V(G_1)$), so $i_\ell=i$ and $j_\ell=j$.  In this kind of operation, number of vertices $V(G_1)$, $(V(G)\setminus V(G_1)$, $V(G_2)$ and $(V(G)\setminus V(G_2)$ make changes in $i'$ and $j'$. So for each $(i,j) \in M$ define:
\begin{equation} \label{gvertex}
S(x_{i,j},y_{i',j'})=Minsize_{(i'_r,j'_r)\in M}\{s_\ell(x_{i,j},y_{i'_\ell,j'_\ell}) \cup s_r(y_{i_r,j_r},z_{i'_r,j'_r})\}
\end{equation}
To find a relation between $(i',j')$, $(i'_\ell,j'_\ell)$ and $(i_r,j_r)$ for different value of $(i'_r,j'_r)$, let $\tau_l(t)$, $\tau_r(t)$ and $\tau(t)$ of $T$ corresponding to  $p$-graphs $G_1$, $G_2$  and $\hat{G}=G_1o_g G_2$.
According to the  number of vertices $D \cap V(G_1)$ and $D\cup V(G_2)$ that dominate $y$, the following cases occur:

\begin{itemize}
 \item[$\bullet$]
$y \in D$, see Figure \ref{gmethod0}.

 In formula \ref{gvertex}, if $i'=0$ and $j'= 0$  then $i'_\ell=j'_\ell=0$, $i_r=j_r=0$ and $j'_r=0$, now we  define:
  $$S(x_{i,j},y_{0,0})=Minsize_{(i'_r,j'_r)\in M}\{s_\ell(x_{i,j},y_{0,0}) \cup s_r(y_{0,0},z_{i'_r,j'_r})\}.$$

\begin{figure}[!h]
\centering
\includegraphics[width=6cm]{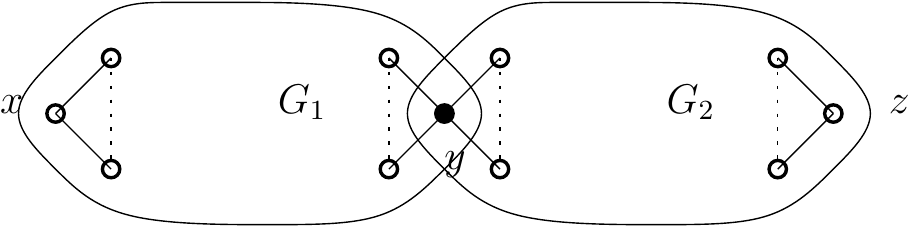}
\caption{$y\in D$}\label{gmethod0}
\end{figure}

\item[$\bullet$]
$y \notin D$, if $y$ is not dominated by any vertex of $\hat{G}$ then, it is dominated by one vertex or  two vertices in $V(G)\setminus V(\hat{G})$, see Figure \ref{gmethod1}.

 In the left side of formula \ref{gvertex}, if $i'=0$ and $j' \neq 0$  then $i'_\ell=i_r=0$ and $j'_\ell=j_r=j$. So that we define
  $$S(x_{i,j},y_{0,1})=Minsize_{(i'_r,j'_r)\in M}\{s_\ell(x_{i,j},y_{0,1}) \cup s_r(y_{0,1},z_{i'_r,j'_r})\},$$
 $$S(x_{i,j},y_{0,2})=Minsize_{(i'_r,j'_r)\in M}\{s_\ell(x_{i,j},y_{0,2}) \cup s_r(y_{0,2},z_{i'_r,j'_r})\}.$$

\begin{figure}[!h]
\centering
\includegraphics[width=6cm]{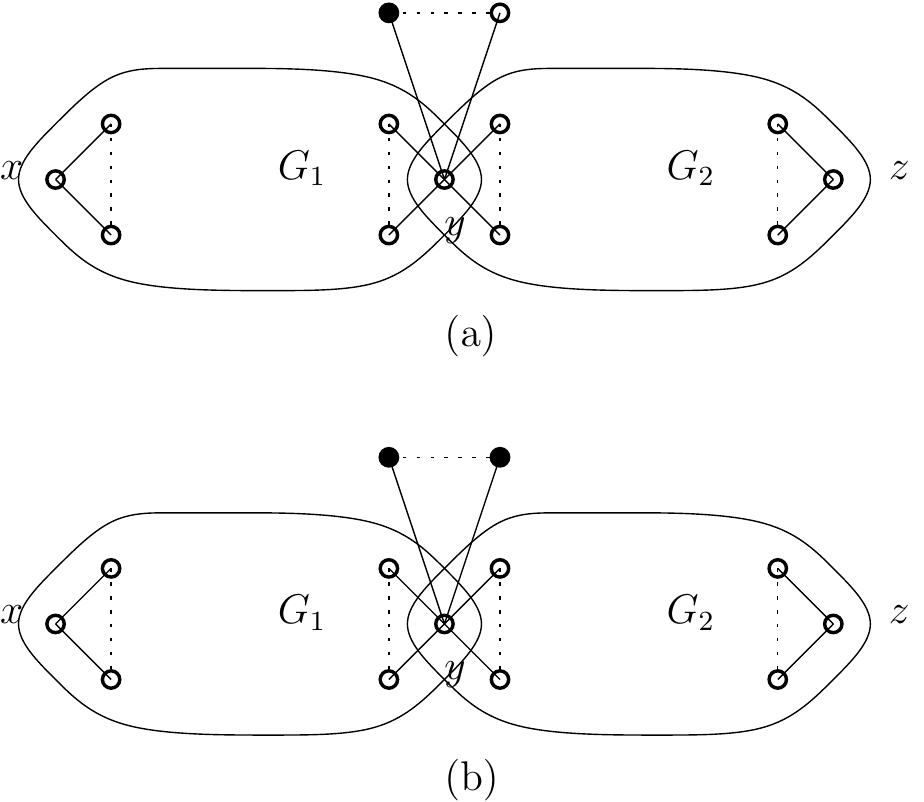}
\caption{$y$ is  dominated by  one vertex or  vertices of $V(G) \setminus V(\hat{G})$.}\label{gmethod1}
\end{figure}

\item[$\bullet$]
$y \notin D$, if $y$ is  dominated by exactly one vertex  $w \in V(\hat{G})$ and it is not dominated by any vertex of $V(G)\setminus V(\hat{G})$, then

 $w \in V(G_1)$, see Figure \ref{gmethod2}.a or $w\in V(G_2)$ see Figure \ref{gmethod2}.b.

 In formula \ref{gvertex}, if $i'=1$ and $j'= 0$  then $(i'_\ell,j'_\ell)=(0,1)\;or\;(1,0)$, $i_r=j'_\ell$ and $j_r=i'_\ell$. So that
 we  define:
  $$S(x_{i,j},y_{1,0})=Minsize_{(i'_r,j'_r)\in M}\{s_\ell(x_{i,j},y_{0,1}) \cup s_r(y_{1,0},z_{i'_r,j'_r}),s_\ell(x_{i_\ell,j_\ell},y_{1,0}) \cup s_r(y_{0,1},z_{i'_r,j'_r})\}.$$
\begin{figure}[!h]
\centering
\includegraphics[width=6cm]{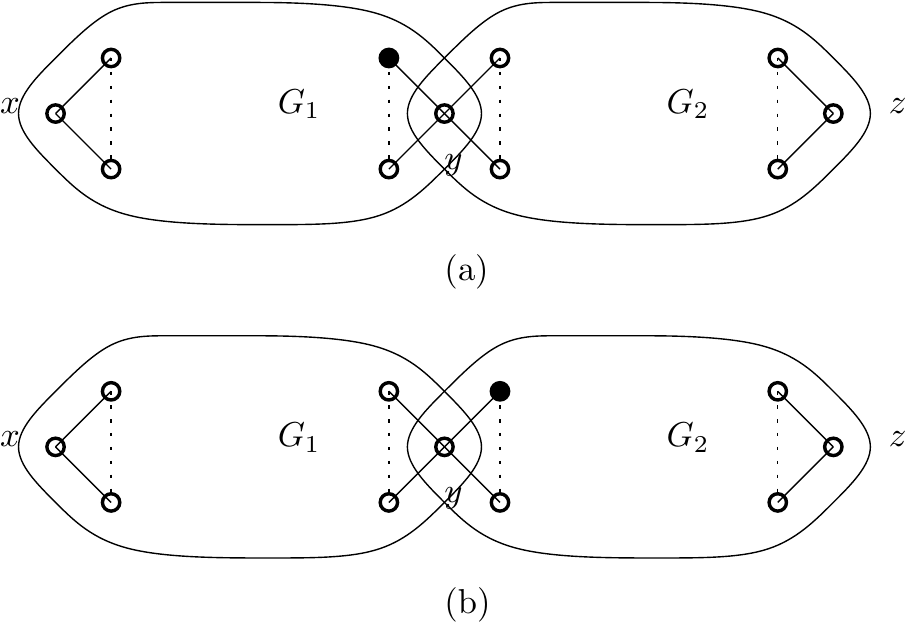}
\caption{$y$ is  dominated by  one vertex  of $ V(\hat{G})$.}\label{gmethod2}
\end{figure}

\item[$\bullet$]
$y \notin D$, let $y$ is  dominated by exactly two vertices of $V(\hat{G})$ and it is not dominated by any $V(G)\setminus V(\hat{G})$
then one of the following cases occur
\begin{itemize}
\item[1.]
$y$ is  dominated by  two vertices of $V(G_1)$, see Figure \ref{gmethod3}.a,
\item[2.]
$y$ is  dominated by  two vertices of $V(G_2)$, see Figure \ref{gmethod3}.b,

\item[3.]
 $y$ is  dominated by one vertex of $V(G_1)$ and one vertex of $V(G_2)$, see Figure \ref{gmethod3}.c.
 \end{itemize}

 In formula \ref{gvertex}, if $i'=2$ and $j'= 0$  then $(i'_\ell,j'_\ell)=(0,2),(2,0)\;or\;(1,1)$ and $i_r=j'_\ell$, $j_r=i'_\ell$.
 we  define:
\begin{equation*}
\begin{array}{ll}
 S(x_{i,j},y_{2,0})=Minsize_{(i'_r,j'_r)\in M}&\{s_\ell(x_{i,j},y_{2,0}) \cup s_r(y_{0,2},z_{i'_r,j'_r}),\\
&s_\ell(x_{i,j},y_{0,2}) \cup s_r(y_{2,0},z_{i'_r,j'_r}),\\
&s_\ell(x_{i,j},y_{1,1}) \cup s_r(y_{1,1},z_{i'_r,j'_r})\}.
\end{array}
\end{equation*}

\begin{figure}[!h]
\centering
\includegraphics[width=6cm]{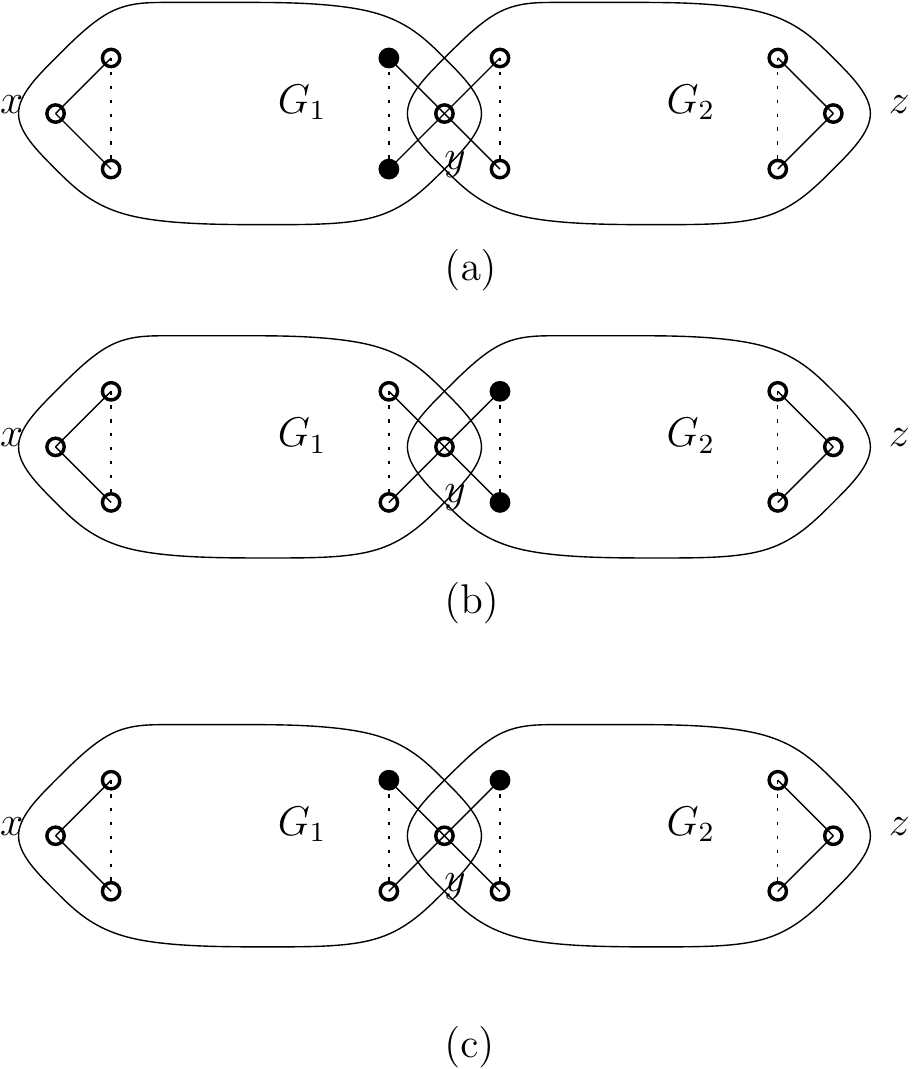}
\caption{$y$ is  dominated by  two vertices of $V(\hat{G})$.}\label{gmethod3}
\end{figure}

\item[$\bullet$]
$y \notin D$, if $y$ is  dominated by one vertex of $\hat{G}$ and one vertex of $V(G)\setminus V(\hat{G})$, then
$y$ is dominated by one vertex of $V(G_1)$ and the other out of $V(G_1o_gG_2)$ see Figure \ref{gmethod4}.a or one vertex of $V(G_2)$ and the other out of $V(G_1o_gG_2)$ see Figure \ref{gmethod4}.b.

 In formula \ref{gvertex}, if $i'=1$ and $j'= 1$  then $(i'_\ell,j'_\ell)=(1,1)$ and $(i_r,j_r)=(0,2)$ or$(i'_\ell,j'_\ell)=(0,2)$ and $(i_r,j_r)=(1,1)$.
 So that we  define:
 $$S(x_{i,j},y_{1,1})=Minsize_{i'_r,j'_r\in\{0,1,2\}}\{s_\ell(x_{i,j},y_{1,1}) \cup s_r(y_{0,2},z_{i'_r,j'_r}),s_\ell(x_{i_\ell,j_\ell},y_{0,2}) \cup s_r(y_{1,1},z_{i'_r,j'_r})\}.$$

\begin{figure}[!h]
\centering
\includegraphics[width=6cm]{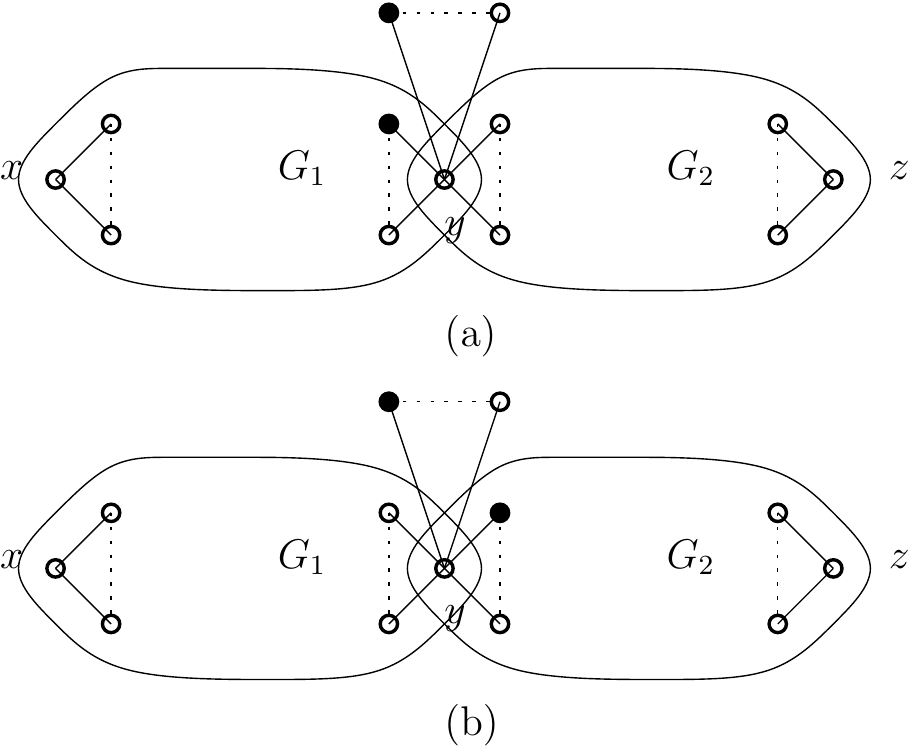}
\caption{$y$ is  dominated by  one vertex of $V(\hat{G})$ and one  vertex of $V(G) \setminus V(\hat{G})$.}\label{gmethod4}
\end{figure}
\end{itemize}
In each formula of subprocedures  $s$-vertex, $p$-vertex and $g$-vertex, we delete any undefinable subformula of the right side of formula computing $S_i(x_{i,j},y_{i',j'})$. Furthermore if all subformulas are deleted from the right hand side of some formula, then $S_i(x_{i,j},y_{i',j'})$ will be undefinable.

\section{Algorithm $[1,2]$-MINSET}
At the first step of algorithm  $[1,2]$-MINSET finding a parse tree for given GSP like $T$ by some known linear algorithms. Each vertex $v$ of $T$ have a label such that $(x,y)_i$, $(x,y)_s$, $(x,y)_p$ or $(x,y)_g$, such that $x$ and $y$ are terminals of corresponding $p$-graph to $\tau(v)$. $x$ and $y$ will be  inputs of procedures. By  traversing parse tree $T$ in bottom-up order and visiting vertices of $T$, proper procedure will be called. After visiting root of $T$ and computing  $S(x_{i,j},y_{i',j'})$ for root$(T)$, a $[1,2]$-set for $G$ can be found. It is enough to return a $S(x_{i,j},y_{i',j'})$ with minimum cardinality when {$(i,j),(i',j')\in \{(0,0),(0,1),(0,2)\}$}.

\algnewcommand\algorithmicswitch{\textbf{switch}}
\algnewcommand\algorithmiccase{\textbf{case}}
\algnewcommand\algorithmicassert{\texttt{assert}}
\algnewcommand\Assert[1]{\State \algorithmicassert(#1)}%
\algdef{SE}[SWITCH]{Switch}{EndSwitch}[1]{\algorithmicswitch\ #1\ \algorithmicdo}{\algorithmicend\ \algorithmicswitch}%
\algdef{SE}[CASE]{Case}{EndCase}[1]{\algorithmiccase\ #1}{\algorithmicend\ \algorithmiccase}%
\algtext*{EndSwitch}%
\algtext*{EndCase}%

\begin{algorithm}[h]
	\caption{:Finding a $\gamma_{[1,2]}$-sets of a GSP}
\begin{algorithmic}[1]

        \State find a parse tree of $G$ like $T$
        \State $S \gets  empty\; stack$ \Comment{ main stack with nodes from X}
        \State $Q  \gets empty\; queue$ \Comment{auxiliary queue to build S}
        \State $Q.add(root\; r)$
        \State $M \gets \{(0,0),(0,1),(0,2),(1,0),(1,1),(2,0)\}$
        \While{$Q \neq \emptyset$}
        \State $x  \gets Q.pop()$
        \State $Q.add(ch(x))$
        \State $S.add(x)$
        \EndWhile

        \While{$S \neq \emptyset$}
        \State $v  \gets S.pop()$ \Comment{all children of v are processed}

  \Switch{$type\;of\;v$}
    \Case{$Leaf$}
    \State ProcessLeaf$(x,y)$
    \Comment{$(x,y)_i$ is the label of $v$}
    \EndCase
    \Case{$s-vertex$}
      \State ProcessSvertex$(x,z,y)$
      \Comment{$(x,y)_s$, $(x,z)$ and $(z,y)$ are label of $v$, left and right child of $v$ respectively. }
    \EndCase

    \Case{$p-vertex$}
     \State ProcessPvertex$(v)$
     \Comment{$(x,y)_p$ is the label of $v$ and the label of left and right child of $v$  is $(x,y)$. }
                     \EndCase

    \Case{$g-vertex$}
    \State ProcessGvertex$(x,y,z)$
    \Comment{$(x,y)_s$, $(x,y)$ and $(y,z)$ are label of $v$, left and right child of $v$ respectively.}
    \EndCase

  \EndSwitch
  \EndWhile
		
\State $D\gets \emptyset$
          \ForAll{$(i,j),(i',j')\in \{(0,0),(0,1),(0,2)\}$}
        \State Add $S(x_{i,j},y_{i',j'})$ to $D$
         \EndFor
          \State $\gamma_{[1,2]}(G)\gets Minsize(D)$
\end{algorithmic}
\end{algorithm}
\begin{theorem}\label{Complexity}
	For a given generalized series-parallel graph $G=(V,E)$, algorithm $[1,2]$-MINSET find a minimum $[1,2]$-set in time $O(\vert V \vert)$.
	\end{theorem}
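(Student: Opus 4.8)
I would split the argument into a correctness part (the output is a $[1,2]$-set of $G$ of cardinality $\gamma_{[1,2]}(G)$) and a timing part ($O(|V|)$), proving correctness first since the time bound depends only on the shape of the computation.

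\textbf{Correctness, by structural induction on the parse tree $T$.} First I would make precise the invariant the four procedures are meant to maintain: for a node $t$ of $T$ with label $(x,y)$ and associated $p$-graph $\hat G$, and for each $(i,j),(i',j')\in M$, the set $S(x_{i,j},y_{i',j'})$ is a minimum-cardinality $D\subseteq V(\hat G)$ such that (a) every vertex of $V(\hat G)\setminus\{x,y\}$ not in $D$ has between one and two neighbours in $D$, (b) $x$ has exactly $i$ neighbours in $D$ and is earmarked to receive exactly $j$ further dominators from $V(G)\setminus V(\hat G)$, with $x\in D$ iff $i=j=0$, and (c) the symmetric condition for $y$ with $(i',j')$; the value is undefinable ($\mathrm{NaN}$) when no such $D$ exists. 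The base case is a leaf, i.e.\ a single edge $\{x,y\}$, and is checked by verifying all thirty-six entries of Table~\ref{table1}/\textsc{ProcessLeaf} against (a)--(c), exactly as the itemised discussion after the table does. For the inductive step I would invoke the defining property of a $p$-graph: the only edges between $V(\hat G)$ and $V(G)\setminus V(\hat G)$ are incident to $x$ or $y$, so every internal vertex of $\hat G$ is dominated entirely from inside $\hat G$. Hence a feasible $D$ for $\hat G$ decomposes as $D_1\cup D_2$ with $D_i=D\cap V(G_i)$ feasible for the child $p$-graph $G_i$, consistent at the shared terminal(s), and conversely every such consistent pair yields a feasible $D$. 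The recurrences (\ref{svertex}), (\ref{pvertex}) and (\ref{gvertex}) are then precisely the statement that an optimal $D$ is obtained by ranging over all consistent boundary states of the glue vertex (and, for $o_p$, over the two ways of apportioning the in-$\hat G$ versus out-of-$\hat G$ dominators of each terminal between $G_1$ and $G_2$) and taking the smallest union of optimal child sets via $\mathrm{Minsize}$; since $D_1\cap D_2$ is contained in the set of shared terminals and its size is forced by the chosen states, minimising $|D_1|$ and $|D_2|$ separately minimises $|D_1\cup D_2|$. At the root, $\hat G=G$, feasibility in the sense of (a)--(c) is exactly ``$D$ is a $[1,2]$-set of $G$'', and the final $\mathrm{Minsize}$ over the admissible boundary states $(i,j),(i',j')\in\{(0,0),(0,1),(0,2)\}$ returns a $\gamma_{[1,2]}$-set.

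\textbf{Running time.} By the cited linear-time parsing lemma a binary parse tree $T$ of $G$ is obtained in $O(|V|)$ time; $T$ has exactly $|E|$ leaves and every internal node has two children, so $T$ has $2|E|-1$ nodes, and since a GSP graph is sparse ($|E|=O(|V|)$) we get $|V(T)|=O(|V|)$. The first \textbf{while}-loop runs one BFS over $T$ and pushes its nodes onto the stack in leaves-first order, costing $O(|V(T)|)=O(|V|)$. The second \textbf{while}-loop pops each node exactly once and calls exactly one of the four procedures; each procedure loops over the fixed index set $M\times M$ (thirty-six pairs) and, per pair, performs a constant number of unions and one $\mathrm{Minsize}$ over a constant-length list. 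To make each such step $O(1)$ rather than time proportional to the set sizes, I would observe that in the bottom-up sweep it suffices to store, for each state, only the cardinality of $S(x_{i,j},y_{i',j'})$ together with a constant-size record of which child states attained the minimum; the explicit $\gamma_{[1,2]}$-set is then assembled by a single top-down traversal of $T$, again $O(|V|)$. Summing the contributions gives total running time $O(|V|)$.

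\textbf{Where the work is.} The routine parts are the leaf base case and the timing count. The substantive part is the inductive step for the $p$-vertex (and, less severely, the $g$-vertex): one must verify that the long list of unions built in \textsc{ProcessPvertex} enumerates \emph{every} decomposition of a feasible $D$ for $G_1\,o_p\,G_2$ — in particular every way the $i$ internal and $j$ external dominators of a terminal can be split among $G_1$, $G_2\setminus\{x,y\}$ and $V(G)\setminus V(\hat G)$ — while never pairing child sets that would over-dominate a terminal. I expect this bookkeeping, organised by the case split of Figures~\ref{pmethod0}--\ref{pmethod4}, to be the bulk of the proof; everything else is straightforward.
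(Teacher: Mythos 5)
Your proposal follows the same route as the paper: a bottom-up dynamic program over the binary parse tree with the $36$ boundary states per node, combined with the counts $\vert V(T)\vert = O(\vert E\vert) = O(\vert V\vert)$ and the cited linear-time parse-tree construction. The paper's own proof is considerably terser --- it asserts correctness with ``it is easy to see'' and does not address the cost of forming explicit set unions --- so your explicit induction invariant, the $p$-graph decomposition argument, and the cardinality-plus-backpointer device for achieving $O(1)$ work per node are refinements of, not departures from, its argument.
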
	

\begin{proof}
Algorithm  $[1,2]$-MINSET, traversing parse tree $T$ in bottom-up order, easily computes at most $36$ set for each internal vertex of them.
 Each initial sets for leaves of tree represents all possible $[1,2]$-dominating sets in a graph containing only one edge. Let
$G_1 = (V_1, E_1)$ and $G_2=(V_2, E_2)$ be the graphs represented by the subtrees $\tau_l(t)$ and $\tau_r(t)$. Assume these are input of procedures ProcessLeaf, ProcessSvertex,   ProcessPvertex and ProcessGvertex. It is easy to see that this procedure  finds all possible minimum $[1,2]$-sets in graph. Finally, extracts only a valid minimum $[1,2]$-set. These step of algorithm require at most $\vert O(V_T) \vert$ operations. Since each binary tree with $n$ leaves has $O(n)$ vertices and the binary parse tree of every GSP graph have has $\vert E(G) \vert$ leaves, so $\vert V_T\vert \in O(\vert E (G)\vert)$. Since every GSP graphs are planar and in planar graph we have $\vert E \vert \leq 3 \vert V \vert - 6$. So in order to prove that algorithm  $[1,2]$-MINSET computes a $\gamma_{[1,2]}$-set for a given GSP graph in time $O(\vert V \vert)$, it is sufficient to show that the parse tree $T$ can be constructed in $O(\vert V \vert)$ operations that it is proved in \cite{kikuno1983linear}.
\end{proof}

\section{Algorithms for minimum total $[1,2]$-set}\label{algorithm2}
In this section, we make some changes in three procedures of $[1,2]-MINSET$ algorithm and define  new procedures  ProcessLeaf-t, ProcessSvertex-t,   ProcessPvertex-t and ProcessGvertex-t to use in $total\;[1,2]-MINSET$ algorithm and compute a $\gamma_{t[1,2]}$-set for a GSP graph $G$ while binary parse tree of $G$ is computed in the first step  of algorithm.
These procedures compute set $S(x_{i,j,k},y_{i',j',k'})$ for leaves labeled by $(x,y)_i$ and internal vertices labeled by $(x,y)_s$, $(x,y)_p$, $(x,y)_g$ of in binary parse tree for all $i,j,i',j' \in \{0,1,2\}$ and $k,k' \in \{0,1\}$.

There is a connection between the sets computed for each vertex of $T$ and the $p$-graph $\hat{G}=(\hat{V},\hat{E})$. Let $D$ is a total $[1,2]$-dominating set for $\hat{G}$, then for each vertex $x\in V(\hat{G})$, $x_{i,j,k}$ implies the following assertions with respect to $i$, $j$, $k$ and $D$.

\begin{itemize}
\item[$\bullet$]
$x_{i,j,0}$ implies that $x \notin D$, $i$ vertices of $V(\hat{G})$ and $j$ vertices of $V(G)\setminus V(\hat{G})$ that they adjacent to $x$ belong to $D$.
\item[$\bullet$]
$x_{i,j,1}$ implies that $x \in D$, $i$ vertices of $V(\hat{G})$ and $j$ vertices of $V(G)\setminus V(\hat{G})$ that they adjacent to $x$ belong to $D$.
\end{itemize}

Since every vertices of graph must be dominated by one vertex or two vertices of total $[1,2]$-set, it is clear,
for all vertices of tree $S(x_{i,j,k},y_{i',j',k'})$ is undefinable, when $(i,j),(i',j')\in\{(0,0),(1,2),(2,1),(2,2)\}$ and we define a new set as follow,
$$M_t=\{(0,1,0),(0,1,1),(1,0,0),(1,0,1),(0,2,0),(0,2,1),(2,0,0),(2,0,1),(1,1,0),(1,1,1)\}.$$
At the end of this algorithm   a $\gamma_{t[1,2]}$-set for a GSP graph $G$ with minimum size can be found.
\subsection*{Procedure for Leaves of $T$}
Input of this procedure is a leaf of $v\in V_T$  labeled by $(x,y)_i$ and output $S(x_{i,j,k},y_{i',j',k'})$ for the leaf labeled by $(x,y)_i$ and for all $(i,j,k),(i',j' ,k')\in M_t$.
We compute $S(x_{i,j,k},y_{i',j',k'})$ for leaves using  table \ref{table1}.

In this procedure, following cases will occur:
\begin{itemize}
  \item
$k=k'=1$:  $x,y\in D$ and  $S(x_{i,j,k},y_{i',j',k'})=\{x,y\}$.
 \item
 $i=1$, $k'=1$: $x$ is dominated by $y$ and  $S(x_{i,j,k},y_{i',j',k'})=\{y\}$.
 \item
$i'=1$, $k=1$: $y$ is dominated by $x$ and  $S(x_{i,j,k},y_{i',j',k'})=\{x\}$.
\item
$k=k'=0$: $x$ and $y$ are dominated by one vertex or two vertices out of this leaf, so $x,y\notin D$ and  $S(x_{i,j,0},y_{i',j',0})=\emptyset$.
\item
$i=0$ and $k'=1$: $S(x_{i,j,k},y_{i',j',k'})$ is undefinable. $i=0$ means that there is not any vertex of $\hat{G}=(\hat{V},\hat{E})$ to dominate $x$, so $y \notin D$ and $S(x_{0,j,k},y_{i',j',k'})$ is definable if $k'=0$. Similarly if $i'=0$ and $k=1$ then,  $S(x_{i,j,k},y_{i',j',k'})$ is undefinable.
\item
$i=1$ and $k'=0$:  $S(x_{i,j,k},y_{i',j',k'})$ is undefinable. $i=1$ means that there is a vertex in $\hat{G}$ to dominate $x$, so it is necessary to  set $y\in D$ and  nd $S(x_{1,j,k},y_{i',j',k'})$ is definable if $k'=1$ . Similarly if $i'=1$ and $k=0$ then  $S(x_{i,j,k},y_{i',j',k'})$ is undefinable.
\item
$i=2$: since there is exactly one vertex $y\in V(\hat{G})$ to dominate $x$, $S(x_{i,j,k},y_{i',j',k'})$ is undefinable and similarly for $i'=2$,  $S(x_{i,j,k},y_{i',j',k'})$ is undefinable.

\end{itemize}

\subsection*{Sets for $s$-vertices of $T$}
Let $v$ is a vertex of $T$ is labeled by $(x,y)_s$. In this procedure, the set $S(x_{i,j,k},y_{i',j',k'})$ will computed for a given vertex $x$, $y$ and common vertex $z$. Assume the sets corresponding to  $\tau_l(t)$ and $\tau_r(t)$ are $S_\ell(x_{i_\ell,j_\ell,k_\ell},z_{i'_\ell,j'_\ell,k'_\ell})$ and $S_r(z_{i_r,j_r,k_r},y_{i'_r,j'_r,k'_r})$ respectively.

Now for each $(i,j,k) \in M_t\}$ define:
\begin{equation} \label{svertext}
S(x_{i,j,k},y_{i',j',k'})=Minsize_{(i'_\ell,j'_\ell,k'_\ell)\in M}\{s_\ell(x_{i,j,k},z_{i'_\ell,j'_\ell,k'_\ell}) \cup s_r(z_{j'_\ell,i'_\ell,k'_\ell},y_{i',j',k'})\}
\end{equation}

$G_1=(V_1,E_1,x,z)$, $G_2=(V_2,E_2,z,y)$ and  $\hat{G}=G_1 o_sG_2=(\hat{V},\hat{E},x,y)$, to prove formula \ref{svertext}, we describe following cases.
\begin{itemize}
\item[$\bullet$]
$z\in D$, if $z$ is dominated by two vertices of $V(G_1)\setminus V(G_2)$, $V(G_2)\setminus V(G_1)$ or one of them of $V(G_1)$ and the other of $V(G_2)$  then, $(2,0,1), (0,2,1), (1,1,1) \in M_t$.
\item[$\bullet$]
$z\notin D$, if $z$ is dominated by exactly one vertex of $V(G_1)\setminus V(G_2)$ then, $(1,0,0)\in M$ and if it   is dominated by exactly one vertex of $V(G_2)\setminus V(G_1)$ then, $(0,1,0)\in M_t$.
\item[$\bullet$]
$z\in D$, if $z$ is dominated by exactly one vertex of $V(G_1)\setminus V(G_2)$ then, $(1,0,1)\in M$ and if it   is dominated by exactly one vertex of $V(G_2)\setminus V(G_1)$ then, $(0,1,1)\in M_t$.
 \item[$\bullet$]
$z\notin D$, if $z$ is dominated by two vertices of $V(G_1)\setminus V(G_2)$, $V(G_2)\setminus V(G_1)$ or one of them of $V(G_1)$ and the other of $V(G_2)$  then, $(2,0,0), (0,2,0), (1,1,0) \in M_t$.

\end{itemize}

\subsection*{Sets for $p$-vertices of $T$}
Let $v$ is a vertex of $T$ is labeled by $(x,y)_p$, in this procedure, the set $S(x_{i,j},y_{i',j'})$ will computed for a given vertex $x$, $y$. The sets corresponding to  $\tau_l(t)$ and $\tau_r(t)$ are $S_\ell(x_{i_\ell,j_\ell},y_{i'_\ell,j'_\ell})$ and $S_r(x_{i_r,j_r},y_{i'_r,j'_r})$ respectively.
%
 For $(i,j,k)\in M_t$, we define,
\begin{equation} \label{pvertext}
S(x_{i,j,k},y_{i',j',k'})=Minsize\{s_\ell(x_{i_\ell,j_\ell,k},y_{i'_\ell,j'_\ell,k'})\cup s_r(x_{i_r,j_r,k},y_{i'_r,j'_r,k'})\}.
\end{equation}

 Let $\tau_l(t)$, $\tau_r(t)$ and $\tau(t)$ represent $p$-graphs $G_1=(V_1,E_1,x,y)$, $G_2=(V_2,E_2,x,y)$ and  $\hat{G}=G_1 o_pG_2=(\hat{V},\hat{E},x,y)$ respectively.
According to values of $(i,j,k)$ (resp. $(i',j',k')$) proper values for $(i_\ell,j_\ell,k_\ell)$ (resp. $(i'_\ell,j'_\ell,k'_\ell)$)  and $(i_r,j_r,k_r)$ (resp. $(i'_r,j'_r,k'_r)$) will determine. Moreover, if $x\in D$ then in formula \ref{pvertext} $k=k_\ell=k_r=1$ and if $x\notin D$, then $k=k_\ell=k_r=0$.
 \begin{itemize}
\item[$\bullet$]
If $x$ is  dominated by one vertex of $V(G)\setminus V(\hat{G})$ then, in formula \ref{pvertext}, $i=0,j=1$ and we have $i_\ell=i_r=0$, $j_\ell=j_r=1$. So that we define,
We define:
\begin{equation*}
S(x_{0,1,k},y_{i',j',k'})=Minsize\{s_\ell(x_{0,1,k},y_{i'_\ell,j'_\ell,k'_\ell})\cup s_r(x_{0,1,k},y_{i'_,j'_r,k'_r})\}
\end{equation*}
\item[$\bullet$]
If $x$ is  dominated by two vertices of $V(G)\setminus V(\hat{G})$ then, in formula \ref{pvertext}, $i=0,j=2$ and we have $i_\ell=i_r=0$, $j_\ell=j_r=2$. So that we define,
We define:
\begin{equation*}
S(x_{0,2,k},y_{i',j',k'})=Minsize\{s_\ell(x_{0,2,k},y_{i'_\ell,j'_\ell,k'_\ell})\cup s_r(x_{0,2,k},y_{i'_,j'_r,k'_r})\}
\end{equation*}
\item[$\bullet$]
If $x$ is  dominated by one vertex of $w\in V(\hat{G})$  then, in formula \ref{pvertext}, $i=1$ and $j=0$. Two following cases occur,
\begin{itemize}
  \item [1.]
  $w\in V(G_1)\setminus V(G_2)$, so that $i_\ell=j_r=1$ and $i_r=j_\ell=0$,
  \item [2.]
 $w\in V(G_2)\setminus V(G_1)$, so that $i_\ell=j_r=0$ and $i_r=j_\ell=1$.
\end{itemize}
We define:
\begin{equation*}
\begin{array}{ll}
S(x_{1,0,k},y_{i',j',k'})=Minsize&\{s_\ell(x_{1,0,k},y_{i'_\ell,j'_\ell,k'_\ell})\cup s_r(x_{0,1,k},y_{i'_r,j'_r,k'_r}),\\
&s_\ell(x_{0,1,k},y_{i'_\ell,j'_\ell,k'_\ell})\cup s_r(x_{1,0,k},y_{i'_r,j'_r,k'_r})\}.
\end{array}
\end{equation*}
\item[$\bullet$]
If $x$ is  dominated by two vertices of $v,w\in V(\hat{G})$  then, in formula \ref{pvertext}, $i=2$ and $j=0$. Following cases occur,
\begin{itemize}
  \item [1.]
  $v,w\in V(G_1)\setminus V(G_2)$, so that $i_\ell=j_r=2$ and $i_r=j_\ell=0$,
  \item [2.]
 $v,w\in V(G_2)\setminus V(G_1)$, so that $i_\ell=j_r=0$ and $i_r=j_\ell=2$,
 \item [2.]
 $w\in V(G_1)\setminus V(G_2)$ and $v\in V(G_2)\setminus V(G_1)$, so that $i_\ell=j_\ell=i_r=j_r=1$.
\end{itemize}
We define:
\begin{equation*}
\begin{array}{ll}
S(x_{2,0,k},y_{i',j',k'})=Minsize&\{s_\ell(x_{2,0,k},y_{i'_\ell,j'_\ell,k'_\ell})\cup s_r(x_{0,2,k},y_{i'_r,j'_r,k'_r}),\\
&s_\ell(x_{0,2,k},y_{i'_\ell,j'_\ell,k'_\ell})\cup s_r(x_{2,0,k},y_{i'_r,j'_r,k'_r}),\\
&s_\ell(x_{1,1,k},y_{i'_\ell,j'_\ell,k'_\ell})\cup s_r(x_{1,1,k},y_{i'_r,j'_r,k'_r})\}.
\end{array}
\end{equation*}
\item[$\bullet$]
If $x$ is  dominated by one vertex of $v\in V(\hat{G})$  and  one vertex of $w\in V(G)\setminus V(\hat{G})$ then, in formula \ref{pvertext}, $i=1$ and $j=1$. Following cases occur,
\begin{itemize}
  \item [1.]
  $v\in V(G_1)$, so that $i_\ell=j_\ell=1$, $i_r=0$ and $j_r=2$,
  \item [2.]
 $v\in V(G_2))$, so that $i_\ell=0$, $j_\ell=2$ and $i_r=j_r=1$.
\end{itemize}
\begin{equation*}
\begin{array}{ll}
S(x_{1,1,k},y_{i',j',k'})=Minsize&\{s_\ell(x_{1,1,k},y_{i'_\ell,j'_\ell,k'_\ell})\cup s_r(x_{0,2,0},y_{i'_r,j'_r,k'_r}),\\
&s_\ell(x_{0,2,0},y_{i'_\ell,j'_\ell,k'_\ell})\cup s_r(x_{1,1,0},y_{i'_r,j'_r,k'_r})\}.
\end{array}
\end{equation*}

\end{itemize}

\subsection*{Sets for $g$-vertices of $T$}
Let $v$ is a vertex of $T$ is labeled by $(x,y)_g$. In this procedure, the set $S(x_{i,j},y_{i',j'})$ will computed for a given vertex $x$, $y$.

Let the roots of $\tau_l(t)$ and $\tau_r(t)$ are labeled by $(x,y)$ and $(y,z)$ respectively, for some $z \in V$.
 The sets corresponding to  $\tau_l(t)$ and $\tau_r(t)$ are $s_\ell(x_{i_\ell,j_\ell,k_\ell},y_{i'_\ell,j'_\ell,k'_\ell})$ and $S_r(x_{i_r,j_r,k_r},y_{i'_r,j'_r,k'_r})$ respectively.
 If a vertex  $w \in V(\hat{G})$ (resp. $V(G)\setminus V(\hat{G})$) dominate $x$ then, $w\in V(G_1)$ (resp.  $V(G)\setminus V(G_1)$), so $i_\ell=i$ and $j_\ell=j$.  In this kind of operation, number of vertices $V(G_1)$, $(V(G)\setminus V(G_1)$, $V(G_2)$ and $(V(G)\setminus V(G_2)$ make changes in $i',j'$and $k'$. So for each $(i,j,k) \in M_t$ define:
\begin{equation} \label{gvertext}
S(x_{i,j,k},y_{i',j',k'})=Minsize\{s_\ell(x_{i,j,k},y_{i'_\ell,j'_\ell,k'}) \cup s_r(y_{i_r,j_r,k'},z_{i'_r,j'_r,k'_r})\}
\end{equation}
To find a relation between $(i',j',k')$, $(i'_\ell,j'_\ell,k'_\ell)$ and $(i_r,j_r,k_r)$ for different value of $(i'_r,j'_r,k'_r)$,
let $\tau_l(t)$, $\tau_r(t)$ and $\tau(t)$ of $T$ corresponding to  $p$-graphs $G_1=(V_1,E_1,x,y)$, $G_2=(V_1,E_1,y,z)$  and $\hat{G}=G_1o_g G_2=(\hat{V},\hat{E},x,y)$.
In formula \ref{gvertext}, if $y\in D$, then $k'=k_\ell=k_r=1$, if $y\in D$, then $'=k_\ell=k_r=0$.
According to the  number of vertices $D \cap V(G_1)$ and $D\cup V(G_2)$ that dominate $y$, the following cases occur:

\begin{itemize}
\item[$\bullet$]
$y$ is dominated by only one vertex of $V(G)\setminus V(\hat{G})$,
in the left side of formula \ref{gvertext}, $i=0$ and $j=1$ , so  $i'_\ell=i_r=0$ and $j'_\ell=j_r=1$. We  define:
 $$S(x_{i,j,k},y_{0,1,k'})=Minsize\{s_\ell(x_{i,j,k},y_{0,1,k'}) \cup s_r(y_{0,1,k'},z_{i'_r,j'_r,k'_r})\}$$

\item[$\bullet$]
$y$ is dominated by only two vertices of $V(G)\setminus V(\hat{G})$  then,
in the left side of formula \ref{gvertext}, $i=0$ and $j=2$ , so  $i'_\ell=i_r=0$ and $j'_\ell=j_r=2$. We  define:
 $$S(x_{i,j,k},y_{0,2,k'})=Minsize\{s_\ell(x_{i,j,k},y_{0,2,k'}) \cup s_r(y_{0,2,k'},z_{i'_r,j'_r,k'_r})\}$$

\item[$\bullet$]
$y$ is  dominated by exactly one vertex of $V(\hat{G})$, in formula \ref{gvertext}, if $i=1$ and $j= 0$ . So two cases occur
\begin{itemize}
\item[1.]
$i'_\ell=j_r=0$ and $i_r=j'_\ell=1$,
\item[2.]
 $i'_\ell=j_r=1$ and $i_r=j'_\ell=0$.
 \end{itemize}
 We  define:
 \begin{equation*}
\begin{array}{ll}
S(x_{i,j,k},y_{1,0',k'})=Minsize&\{s_\ell(x_{i,j,k},y_{1,0,k'}) \cup s_r(y_{0,1,k'},z_{i'_r,j'_r,k'_r}),\\
&s_\ell(x_{i_\ell,j_\ell,k_\ell},y_{1,0,k'}) \cup s_r(z_{0,1,k'},y_{i'_r,j'_r,k'_r})\}.
\end{array}
\end{equation*}
\item[$\bullet$]
 $y$ is  dominated by only  two vertices of $V(\hat{G})$, formula \ref{gvertext}, $i=2$ and $j= 0$  so, one of the following cases occur
 \begin{itemize}
 \item[1.]
 $i'_\ell=j_r=0$ and $i_r=j'_\ell=2$,
 \item[2.]
 $i'_\ell=j_r=2$ and $i_r=j'_\ell=0$,
 \item[3.]
 $i'_\ell=j_r=1$ and $i_r=j'_\ell=1$.
 \end{itemize}
 We  define:
\begin{equation*}
\begin{array}{ll}
 S(x_{i,j,k},y_{2,0,k'})=Minsize&\{s_\ell(x_{i,j,k},y_{2,0,k'}) \cup s_r(y_{0,2,k'},z_{i'_r,j'_r,k'_r}),\\
&s_\ell(x_{i,j,k},y_{0,2,k'}) \cup s_r(y_{2,0,k'},z_{i'_r,j'_r,k'_r}),\\
&s_\ell(x_{i,j,k},y_{1,1,k'}) \cup s_r(y_{1,1,k'},z_{i'_r,j'_r,k'_r})\}.
\end{array}
\end{equation*}
\item[$\bullet$]
$y$ is dominated by exactly one vertex of $V(\hat{G})$ and one vertex of $V(G)\setminus V(\hat{G})$, in formula \ref{gvertext} $i=1$ and $j= 1$  so, two cases occur
\begin{itemize}
\item[1.]
$i'_\ell=j'_\ell=1$, $i_r=0$ and $j'_r=1$ ,
\item [2.]
$i'_\ell=0$, $j'_\ell=1$ and $i_r=j_r=1$.
\end{itemize}
 We  define:
 \begin{equation*}
\begin{array}{ll}
S(x_{i,j,k},y_{1,1,k'})=Minsize&\{s_\ell(x_{i,j,k},y_{1,1,k'}) \cup s_r(y_{0,2,k'},z_{i'_r,j'_r,k'_r}),\\
&s_\ell(x_{i,j,k},y_{0,2,k'}) \cup s_r(y_{1,1,k'},z_{i'_r,j'_r,k'_r})\}.
\end{array}
\end{equation*}
\end{itemize}

Similar to procedure internal-set of algorithm $[1,2]$-MINSET,  in each formula of methods  $s$-vertex, $p$-vertex and $g$-vertex, we delete any undefinable subformula of the right side of formula computing $S_i(x_{i,j,k},y_{i',j',k'})$.
 Furthermore if all subformulas are deleted from the right hand side of some formula, then $S(x_{i,j,k},y_{i',j',k'})$ will be undefinable.

Algorithm total $[1,2]$-MINSET is similar to algorithm $[1,2]$-MINSET when we use procedures  ProcessLeaf-t, ProcessSvertex-t,   ProcessPvertex-t and ProcessGvertex-t.

\begin{algorithm}[h]
	\caption{:Finding a $\gamma_{t[1,2]}$-sets of a GSP}
\begin{algorithmic}[1]

        \State find a parse tree of $G$ like $T$
        \State $M_t \gets \{(0,1,0),(0,1,1),(1,0,0),(1,0,1),(0,2,0),(0,2,1),(2,0,0),(2,0,1),(1,1,0),(1,1,1)\}.$
        \State traversing $T$ in a bottom up order and visit vertex $v$

        \Switch{$type\;of\;v$}
            \Case{$Leaf$}
                \State ProcessLeaf$(x,y)$
                \Comment{$(x,y)_i$ is the label of $v$}
            \EndCase
            \Case{$s-vertex$}
                \State ProcessSvertex$(x,z,y)$
                \Comment{$(x,y)_s$, $(x,z)$ and $(z,y)$ are label of $v$, left and right child of $v$ respectively. }
            \EndCase
            \Case{$p-vertex$}
                \State ProcessPvertex$(v)$
                \Comment{$(x,y)_p$ is the label of $v$ and the label of left and right child of $v$  is $(x,y)$. }
            \EndCase
            \Case{$g-vertex$}
                \State ProcessGvertex$(x,y,z)$
                \Comment{$(x,y)_s$, $(x,y)$ and $(y,z)$ are label of $v$, left and right child of $v$ respectively.}
            \EndCase
        \EndSwitch

\State $D\gets \emptyset$  \Comment{$(x,y)$ is the label of root $T$}
          \ForAll{$(i,j,k),(i',j',k')\in \{(1,0,0),(1,0,1),(2,0,0),(2,0,1)\}$}
                \State Add $S(x_{i,j,k},y_{i',j',k'})$ to $D$
          \EndFor
                \State $\gamma_{[1,2]}(G)\gets Minsize(D)$
\end{algorithmic}
\end{algorithm}
\begin{theorem}\label{Complexity}
	For a given GSP graph $G=(V,E)$, algorithm total $[1,2]$-MINSET find a minimum $[1,2]$-set in time $O(\vert V \vert)$.
	\end{theorem}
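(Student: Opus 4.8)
The plan is to follow the same two-part template as the proof of the earlier complexity theorem, establishing correctness by structural induction on the parse tree and then bounding the running time by a linear function of $\vert V\vert$.

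First I would fix the correctness invariant. For every node $t$ of the parse tree $T$, let $\hat{G}$ be the associated $p$-graph with terminals $x,y$. The claim is that each stored entry $S(x_{i,j,k},y_{i',j',k'})$ is a minimum-cardinality subset $D\subseteq V(\hat{G})$ for which there exists an extension of $D$ to a total $[1,2]$-set of $G$ realising exactly the prescribed counts --- $i$ (resp.\ $j$) vertices of $V(\hat{G})$ (resp.\ $V(G)\setminus V(\hat{G})$) dominating $x$, the symmetric condition for $y$, and the flags $k,k'$ recording whether $x,y\in D$ --- and that the entry is undefinable exactly when no such $D$ exists. I would prove this by induction on $T$: the base case is the leaf table, whose correctness is precisely the case analysis already carried out for \textsc{ProcessLeaf}-t; the inductive step for $s$-, $p$- and $g$-vertices is the case analysis accompanying formulas \ref{svertext}, \ref{pvertext} and \ref{gvertext}, where one checks that every way the two children's partial solutions can be consistently glued at the shared terminal ($z$ for $s$-vertices, $x$ and $y$ for $p$-vertices, $y$ for $g$-vertices) is enumerated, and that $Minsize$ selects the smallest. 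Restricting the admissible index pairs to $M_t$ records the fact that a total $[1,2]$-set dominates every vertex, $x$ and $y$ included, at least once and at most twice, so the pairs $(0,0),(1,2),(2,1),(2,2)$ are never realisable; hence the tables have constant size $\vert M_t\vert^2=100$. Applying the invariant at the root of $T$ and taking $Minsize$ over the entries with $(i,j,k),(i',j',k')\in\{(1,0,0),(1,0,1),(2,0,0),(2,0,1)\}$ then yields a $\gamma_{t[1,2]}$-set of $G$.

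Next I would bound the running time exactly as in the proof of Theorem~\ref{Complexity} for $[1,2]$-MINSET. Each of \textsc{ProcessLeaf}-t, \textsc{ProcessSvertex}-t, \textsc{ProcessPvertex}-t and \textsc{ProcessGvertex}-t combines a constant number of previously computed sets with a constant number of unions and $Minsize$ calls, so processing one node costs $O(1)$ provided sets are represented by their size together with pointers to their constituents (reconstructing the explicit vertex list only once, at the root). A binary parse tree with $\vert E(G)\vert$ leaves has $O(\vert E(G)\vert)$ nodes, and since every GSP graph is planar we have $\vert E\vert\le 3\vert V\vert-6$, so $\vert V_T\vert\in O(\vert V\vert)$ and the bottom-up traversal runs in $O(\vert V\vert)$ time. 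By the quoted lemma of Hopcroft and Tarjan, the parse tree $T$ itself is obtainable in $O(\vert V\vert)$ time. Summing these contributions gives the claimed $O(\vert V\vert)$ bound.

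I expect the main obstacle to lie not in the complexity count, which is routine, but in the correctness bookkeeping: one must verify that the gluing formulas for the $p$- and $g$-vertices are exhaustive and do not overcount, in particular that the ``outside'' multiplicities $j,j'$ are partitioned consistently between the two child sides (this is where the reflected indices $S_r(x_{j,i},\dots)$ in the $p$-vertex procedure, and the analogous substitutions in the $s$- and $g$-vertex formulas, come in) and that deleting undefinable subformulas never discards a genuinely realisable configuration. Care is also needed with the membership flags: since whether a terminal belongs to $D$ must agree across the two children and the parent, the constraints $k=k_\ell=k_r$ forced in formula \ref{pvertext} and the corresponding identification in formula \ref{gvertext} have to be checked to be both necessary and sufficient.
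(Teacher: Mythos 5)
The paper states this theorem without proof, implicitly deferring to the argument given for the $[1,2]$-MINSET version; your proposal follows that same template (bottom-up dynamic programming over the parse tree with a constant-size table per node, $O(\vert E\vert)$ parse-tree vertices, planarity giving $\vert E\vert\le 3\vert V\vert-6$, and linear-time parse-tree construction) and is correct. If anything you are more careful than the paper: the explicit induction invariant at each node and the observation that sets must be represented by size plus pointers to their constituents so that each union and $Minsize$ call costs $O(1)$ are details the paper's own (informal) argument for the non-total case glosses over.
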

	\section*{Concluding remarks}
	
In this paper, we initiate the study of the $[1,2]$-set and total  $[1,2]$-set problems for generalized series-paralle graphs. We have also provided exact polynomial time algorithms for these classes of graphs.

\section*{Acknowledgment}
	The authors are grateful to A. Shakiba and A. K. Goharshady for their constructive comments
	and suggestions on improving of our paper.	

\end{document}